\newlength{\RoundedBoxWidth}
\newsavebox{\GrayRoundedBox}
\newenvironment{GrayBox}[1]%
   {\setlength{\RoundedBoxWidth}{.93\textwidth}
    \def\boxheading{#1}
    \begin{lrbox}{\GrayRoundedBox}
       \begin{minipage}{\RoundedBoxWidth}}%
   {   \end{minipage}
    \end{lrbox}
    \begin{center}
    \begin{tikzpicture}%
       \node(Text)[draw=black!20,fill=white,rounded corners,%
             inner sep=2ex,text width=\RoundedBoxWidth]%
             {\usebox{\GrayRoundedBox}};
        \coordinate(x) at (current bounding box.north west);
        \node [draw=white,rectangle,inner sep=3pt,anchor=north west,fill=white] 
        at ($(x)+(6pt,.75em)$) {\boxheading};
    \end{tikzpicture}
    \end{center}}     
\newenvironment{defproblemx}[2][]{\noindent\ignorespaces%
                                \FrameSep=6pt%
                                \parindent=0pt%
                \vspace*{-1.5em}
                \ifthenelse{\isempty{#1}}{%
                  \begin{GrayBox}{\textsc{#2}}%                
                }{%
                  \begin{GrayBox}{\textsc{#2}  parameterized by~{#1}}%  
                }
                \begin{tabular*}{\textwidth}{@{\hspace{.1em}} >{\itshape} p{1.8cm} p{0.8\textwidth} @{}}%        
            }{
                \end{tabular*}%
                \end{GrayBox}%
                \ignorespacesafterend
            }  
\title{Space Efficient Algorithms for Parameterised Problems}
\author{Sheikh Shakil Akhtar\inst{1} \and Pranabendu Misra\inst{1} \and Geevarghese Philip\inst{1}}
\institute{Chennai Mathematical Institute}
\authorrunning{S.S. Akhtar, P. Misra, G. Philip}
\begin{document}

\maketitle

\begin{abstract}
	We study ``space efficient'' FPT algorithms for graph problems with
limited memory. Let $n$ be the size of the input graph and $k$ be the
parameter. We present algorithms that run in time $f(k)\cdot n^{{\cal
O}(1)}$ and use $g(k)\cdot (\log n)^{{\cal O}(1)}$ working space,
where $f$ and $g$ are functions of $k$ alone, for {\sc $k$-Path}, {\sc
MaxLeaf SubTree} and {\sc Multicut in Trees}. These algorithms are
motivated by big-data settings where very large problem instances must
be solved, and using $n^{O(1)}$ memory is prohibitively expensive.
They are also theoretically interesting, since most of the standard methods tools, such as deleting a large set of vertices or edges, are unavailable, and we must a develop different way to tackle them.

\end{abstract}

%\newpage

%\keywords{Parameterised Algorithms, Long Path, Steiner Tree, MaxLeaf,
%Multicut, Graph Theory, Space-bounded computation}

%\newpage
\section{Introduction}

With the increasing use of big data in practical applications, the field of \emph{space-efficient algorithms} has increased in importance. Traditionally, the time required by an algorithm has been the primary focus of analysis. However, when dealing with large volumes of data we must also pay attention to the working space required to run the algorithm, otherwise it will be impossible to run the algorithm at all. 

A few models of computation for space-efficient algorithms have been
proposed, the most prominent among them being \emph{streaming
algorithms}~\cite{10.1145/237814.237823}. Recently streaming algorithms have been studied
in the parameterised setting~\cite{chitnis_et_al:LIPIcs.IPEC.2019.7}, to solve parameterized versions of NP-hard problems.
This model places strong restrictions on how many times one can access
the input data. Consequently, we get strong lower bounds on the amount
of space required, and this restricts the set of problems that can be
efficiently solved~\cite{ghosh_et_al:LIPIcs.ESA.2024.60} in this
model. In particular, it is well-known that for most problems on graphs
with $n$ vertices, the space required by a streaming algorithm is at
least $n \cdot (\log n)^{{\Omega}(1)}$; this is true even for
something as basic as detecting if the input graph contains a cycle
\cite{10.1145/2627692.2627694}.
%\textcolor{red}{Add a reference for this statement.}. 

A more relaxed model allows for the input to be read as many times as
needed, but restricts the amount of working space that the algorithm
can use. This is the model that we consider in this work, where we
study various problems on graphs. Let $n$ be the number of vertices of
the input graph and let $k$ be the \emph{parameter} which
is---typically---the size of the solution that we are looking for. We
study algorithms that run in time $f(k)\cdot n^{{\cal O}(1)}$ and use
$g(k)\cdot (\log n)^{{\cal O}(1)}$ working space, where $f$ and $g$
are functions of $k$ alone. We say that these algorithms are {\em
Fixed-parameter tractable (FPT)}~\cite{DBLP:books/sp/CyganFKLMPPS15}
and space-efficient. Note that with so little working space, many
standard algorithmic tools and techniques become unavailable. Indeed,
even something as simple as deleting some edges from the graph could
become non-trivial to implement. This is because keeping track of a
large arbitrary set of edges that was deleted, will require more
memory than is allowed by the model. Therefore, developing algorithms
under this model seems to require new ideas and methods.

The class of problems solvable in LOGSPACE, and related complexity classes are very well studied in computer science~\cite{arora2009computational}. 
There has been a lot of work on streaming
algorithms for problems on graphs~\cite{FEIGENBAUM2005207,10.1145/2627692.2627694}, and more recently on parameterised streaming algorithms~\cite{doi:10.1137/1.9781611977912.28}.
Space efficient FPT algorithms have been studied
earlier~\cite{bergougnoux_et_al:LIPIcs.ESA.2023.18,ElberfeldST15,DBLP:conf/mfcs/FafianieK15,CHEN2022104951,10.1007/978-981-97-2340-9_22,BodlaenderFOCS21,BodlaenderESA22,BodlaenderIPEC22,BodlaenderWG24}
for {\sc Vertex Cover}, {\sc $d$-Hitting Set}, {\sc Edge-Dominating
Set}, {\sc Maximal Matching}, {\sc Feedback Vertex Set}, {\sc Path
Contraction}, {\sc List Coloring} and other problems. % and {\sc Cluster Editing}.
Furthermore, a theory of hardness (primarily based on strictly using $g(k)\cdot O(\log n)$-space) is being developed~\cite{ElberfeldST15,BodlaenderFOCS21,BodlaenderESA22,BodlaenderIPEC22,BodlaenderWG24}.

In this paper, we present algorithms that use $g(k)\cdot \log^{O(1)} n$-space and take $f(k) \cdot n^{O(1)}$-time for the following problems.
\\
\noindent\fbox{\begin{minipage}{\textwidth}
{\sc $k$-Path}\\
Input: An undirected graph \(G = (V,E)\) and an
integer \(k\).	

Parameter: \(k\)

Question: Does \(G\) have a path on \(k\) vertices?
\end{minipage}}
\\
%\noindent\fbox{\begin{minipage}{\textwidth}
%{\sc Steiner Tree}\\
%Input: An undirected graph \(G = (V,E)\), \(T \subseteq V\) and an
%integer \(k\).	
%
%Parameter: \(k\)
%
%Question: Does \(G\) have a connected subgraph on at most \(k\) vertices, say
%\(H\) such that \(T \subseteq V(H)\)?
%\end{minipage}}

\noindent\fbox{\begin{minipage}{\textwidth}
{\sc MaxLeaf Subtree}\\
Input: An undirected graph \(G = (V,E)\) and an integer \(k\).	

Parameter: \(k\)

Question: Does \(G\) have a subtree with at least \(k\) leaves?
\end{minipage}}
\\

\noindent\fbox{\begin{minipage}{\textwidth}
{\sc Multicut in Trees}\\
Input: An undirected tree \(T = (V,E), n = | V |\), a collection \(H\)
of \(m\) pairs of nodes in \(T\) and an integer \(k\).	

Parameter: \(k\)

Question: Does \(T\) have an edge subset of size at most \(k\) whose
removal separates each pair of nodes in \(H\)?
\end{minipage}}
\\

We obtain the following results:

\begin{restatable}[]{theorem}{kpath}
\label{th:kpath}
	There is a deterministic algorithm, that solves the \textsc{\(k\)-Path}
	problem, runs in time \(n^{\mathcal{O}(1)} \cdot 2^{k^2}
	\cdot k!\) and uses \(\mathcal{O}(2^{k^2} \cdot k!
	\cdot k \cdot \log n)\) working space.
\end{restatable}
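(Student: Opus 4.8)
The plan is to combine \emph{color coding} with a space-frugal reachability subroutine, exploiting the two defining features of the model: we may re-read the read-only input $G$ arbitrarily often, but we may never store $G$ itself, nor any auxiliary set whose size grows with $n$. First I would reduce \textsc{$k$-Path} to detecting a \emph{colorful} $k$-path. Fix a coloring $c\colon V\to[k]$ and an order $\pi=(\pi_1,\dots,\pi_k)$ of the colors. A $k$-path whose $i$-th vertex has color $\pi_i$ then corresponds exactly to a directed source-to-sink path in the $k$-layer DAG whose $i$-th layer is $c^{-1}(\pi_i)$ and whose arcs are the edges of $G$ between consecutive layers; crucially, because the colors along such a path are pairwise distinct, the path is \emph{automatically simple}, so we never have to remember a ``visited'' set. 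This is what makes color coding the right tool here: it moves the burden of maintaining distinctness (which would cost $\Omega(n)$ space to track explicitly) into the colors, which live in a universe of size $k$.

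For correctness I would use a deterministic family of colorings that is injective on every $k$-subset of $V$ (an $(n,k)$-perfect hash family / splitter), so that whenever a $k$-path exists, some coloring in the family makes its vertex set rainbow; I would then try all $k!$ orders $\pi$ of the colors against that coloring. A \textsc{Yes} answer is reported iff some (coloring, order) pair yields a source-to-sink path in the corresponding layered DAG. The two combinatorial factors in the bound come from here: the $k!$ orders, and the color-state bookkeeping of the reachability routine, which I would index not by vertices (there are $n$ of them, which is forbidden) but by the color information seen so far — a sequence of subsets of $[k]$, of which there are at most $2^{k^2}$. Maintaining one representative partial colorful path per such state, each stored as a sequence of at most $k$ vertex identifiers using $\mathcal{O}(k\log n)$ bits, accounts for the $\mathcal{O}(2^{k^2}\cdot k!\cdot k\cdot\log n)$ working space, while the time is this same count of states multiplied by the $n^{\mathcal{O}(1)}$ cost of scanning $G$ to test the relevant adjacencies.

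The main obstacle — and the step I expect to be genuinely delicate — is carrying out the layered reachability test in FPT \emph{time} (a constant exponent on $n$) while using only $f(k)\cdot(\log n)^{\mathcal{O}(1)}$ space. The two obvious routines both fail one requirement: a left-to-right breadth-first sweep is polynomial-time but stores an $\Omega(n)$-sized reachable frontier, and a Savitch-style recursive midpoint search uses only polylogarithmic space but incurs $n^{\Theta(\log k)}$ time, whose exponent depends on $k$. Bridging this gap is precisely where the $2^{k^2}$ factor must be spent: I would collapse the memoization from the vertex level to the color-state level, so that the recursion branches over the $\le 2^{k^2}$ color-states rather than over vertices, and recover all vertex-level adjacency information on demand by re-reading $G$ rather than storing it. The crux of the proof is therefore the \emph{soundness} of this collapse — showing that remembering only color-indexed representatives (plus on-the-fly re-reads of the graph) never loses a genuine colorful path. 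I would argue this by an exchange argument: any colorful source-to-sink walk in the DAG can be reassembled from the stored per-state representatives together with re-verified edges of $G$, using the fact that distinct colors force distinct vertices so that representatives of different color-states are never in conflict.

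Finally I would assemble the pieces: correctness follows from the perfect-hash-family guarantee composed with the soundness of the color-state subroutine, and the stated time and space bounds follow by multiplying the number of (coloring, order, color-state) triples by the per-triple polynomial work and the $\mathcal{O}(k\log n)$ per-record storage. The remaining routine points — deterministically generating and addressing the coloring family within the space budget, and confirming that every manipulation of $G$ is a read rather than a store — I would verify but not belabor.
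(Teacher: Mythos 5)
Your overall framing---deterministic color coding plus a space-frugal reachability test on a layered graph---matches the paper's strategy, but the core subroutine you propose is unsound, and the missing ingredient is exactly the tool the paper leans on. The paper does not attempt any memoization at all: it builds the layered graph $G^\star$ implicitly (two new terminals $s,t$; $s$ joined to all vertices of the first color, $t$ to all vertices of the last; only edges of $G$ between consecutive colors retained), provides its adjacency via an $\mathcal{O}(\log n)$-space oracle, and then invokes Reingold's theorem that \emph{undirected} $s$-$t$ connectivity is decidable deterministically in polynomial time and $\mathcal{O}(\log n)$ space. Since any $s$-$t$ path in $G^\star$ must step through the colors $c_1,\dots,c_k$ in order, its interior is a simple path on at least $k$ vertices of $G$; this settles correctness with no frontier, no representatives, and no recursion. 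Note also that the $2^{k^2}\cdot k!$ factor in the theorem arises purely from enumerating ordered $k$-tuples of distinct colors (the paper's hash family maps into $[k^2]$, not $[k]$), not from any reachability bookkeeping.

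Your substitute for Reingold---keeping ``one representative partial colorful path per color-state''---fails. Whether a partial colorful path extends into the next layer depends on the identity of its endpoint vertex, not on its color history: take layer $1$ with two vertices $u_1,u_2$ of the first color, and layer $2$ with a single vertex $w$ adjacent only to $u_2$. The partial paths $(s,u_1)$ and $(s,u_2)$ have identical color-states, yet only one of them extends; if your algorithm stores $(s,u_1)$ as the representative, re-reading $G$ cannot recover the lost path, because what is missing is not adjacency information but the global fact of which layer-$1$ vertices are reachable. The exchange argument you sketch does not repair this: distinctness of colors \emph{across} layers says nothing about the interchangeability of same-colored vertices \emph{within} a layer. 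Patching the scheme by keeping all reachable endpoints returns you to $\Omega(n)$ space, and branching a Savitch-style recursion over color-states rather than vertices does not typecheck, since the midpoint of a path is a vertex, not a state. The gap is genuine, and it is precisely filled by Reingold's deterministic logspace, polynomial-time connectivity algorithm applied to the implicit undirected layered graph.
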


%\begin{restatable}[]{theorem}{stntree}
%\label{th:stntree}
%	There is a constant probability, one-sided error randomised
%	algorithm that solves the \textsc{Steiner Tree}
%	problem, which runs in time \(n^{\mathcal{O}(1)} \cdot 2^{k^2} \cdot
%	k! \cdot k^{k - 2}\) and uses
%	\((\text{log } n)^{\mathcal{O}(1)} \cdot 2^{k^2} \cdot k! \cdot
%	k^{k - 1}\) as working space.
%\end{restatable}

\begin{restatable}[]{theorem}{maxleafsubtree}
\label{th:maxleafsubtree}
	There is a deterministic algorithm which solves the
	\textsc{MaxLeaf Subtree} problem, in
	time \(n^{\mathcal{O}(1)} \cdot 4^k\) and uses
	\(\mathcal{O}(4^k \cdot k \cdot \log n)\)
	working space.
\end{restatable}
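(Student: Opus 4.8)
The plan is to adapt the classical $\mathcal{O}^{*}(4^{k})$-time branching algorithm for finding a subtree with $k$ leaves to the small-space model. First I would reduce to the task of finding a subtree with \emph{exactly} $k$ leaves, since any subtree with more leaves can be pruned, and dispose of the easy case in which some vertex has degree at least $k$ (its closed star already yields $k$ leaves). For a \yesinstance there is then a subtree $T$ whose \emph{skeleton}---the tree obtained by suppressing all degree-$2$ vertices---is a topological tree $\hat{T}$ with exactly $k$ leaves, at most $k-1$ internal vertices, and hence $\mathcal{O}(k)$ branch vertices joined by internally vertex-disjoint paths of $G$. This skeleton has only $\mathcal{O}(k)$ vertices and is what I would store in place of the full tree; the factor $4^{k}$ will come from the branching search described next.

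The search grows a partial tree by a bounded branching: repeatedly pick a frontier (undecided leaf) vertex $v$ and branch on the binary choice of declaring $v$ a final leaf versus expanding $v$ into the part of $G$ outside the current tree. A measure-and-conquer analysis in $k$ minus the number of committed leaves keeps the search tree to $\mathcal{O}(4^{k})$ nodes of depth $\mathcal{O}(k)$, giving the $n^{\mathcal{O}(1)}\cdot 4^{k}$ running time. To meet the space budget I would never store the grown tree explicitly: as a double-broom (two stars joined by a path of length $\Theta(n)$) shows, a minimal solution can contain $\Theta(n)$ degree-$2$ vertices, so an explicit representation needs $\Omega(n\log n)$ space. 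Instead each partial solution is represented only by its $\mathcal{O}(k)$ skeleton vertices together with the committed leaves, i.e.\ by $\mathcal{O}(k)$ names of vertices of $G$, using $\mathcal{O}(k\log n)$ bits; the long connecting paths are \emph{recomputed on demand} by space-efficient reachability rather than stored. Processing the search frontier level by level and keeping at most $\mathcal{O}(4^{k})$ such skeletons alive at once yields the claimed $\mathcal{O}(4^{k}\cdot k\cdot\log n)$ working space.

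The main obstacle is exactly this refusal to store the tree: when the search extends a frontier vertex along a long degree-$2$ path, the interior vertices of that path must not be reused by later extensions, yet they are never written down. I would resolve this by always growing a connecting path into the component of $G$ lying outside the current skeleton, so that at the moment of its creation the path is disjoint from every recorded vertex, and then arguing, via a Menger-type exchange argument on the tree of demands, that a globally vertex-disjoint realization of all $\mathcal{O}(k)$ skeleton edges exists whenever each edge is individually realizable in $G$ with the other skeleton vertices deleted. Establishing this exchange property---so that disjointness can be certified using only the $\mathcal{O}(k)$ stored endpoints together with repeated reachability queries, without ever listing the path interiors---is the delicate step on which the space bound rests; the remaining bookkeeping (mapping branch vertices to $G$, checking adjacencies, and verifying leaf counts) is routine within the stated budget.
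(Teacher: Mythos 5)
Your high-level plan (binary branching ``commit as leaf'' vs.\ ``expand'', a $4^{k}$ search tree, partial solutions stored in $\mathcal{O}(k\log n)$ bits, long paths recovered by Reingold-style reachability) matches the paper's, but the step your whole space bound rests on is missing, and as you state it, it is false. You propose to represent a partial tree by its skeleton (branch vertices and committed leaves) and to certify that the skeleton is realizable by checking each skeleton edge \emph{individually}, i.e.\ that its two endpoints are connected in $G$ with the other skeleton vertices deleted, invoking a ``Menger-type exchange argument'' to conclude that a jointly internally vertex-disjoint realization exists. No such implication holds: take $G$ with edges $sp, sq, px, pz, qy$ and the skeleton star with center $s$ and leaves $x,y,z$. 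Each skeleton edge is individually realizable ($s\text{-}p\text{-}x$ avoids $y,z$; $s\text{-}q\text{-}y$ avoids $x,z$; $s\text{-}p\text{-}z$ avoids $x,y$), yet the paths to $x$ and to $z$ must both use $p$, so no disjoint realization with center $s$ exists. Correctness of your search would then have to rest on a much more global statement (``if some skeleton passes the individual tests, then $G$ has \emph{some} $k$-leaf subtree''), which you neither formulate nor prove; this is a genuine gap, not a routine detail.

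The paper avoids this difficulty entirely with a different key invariant: it only ever grows \emph{inner-maximal} trees, i.e.\ trees $T$ in which every internal vertex $v$ satisfies $N_G(v)\subseteq V(T)$ (when an undecided leaf is expanded, \emph{all} of its neighbours outside $T$ are added at once, and degree-2 chains are followed greedily). Its Lemma~\ref{lm:obs31} shows that an inner-maximal tree is uniquely determined by its root together with its leaf set, so storing $(\mathrm{root},R,B)$ --- $\mathcal{O}(k\log n)$ bits, since the algorithm stops once $k$ leaves exist and the max-degree-$\geq k$ case is handled separately --- is a \emph{lossless} encoding of the partial tree: a vertex $w$ lies in the interior of $T$ if and only if $w$ is connected to the root in $G[V(G)\setminus \mathrm{leaves}(T)]$, which is decidable by $\mathcal{A}_{con}$ in $\mathcal{O}(\log n)$ space. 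This replaces your unproven disjointness lemma with a reconstruction argument that is immediate from inner-maximality, and it is the idea you would need to import to make your proof go through.
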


\begin{restatable}[]{theorem}{multicutintree}
\label{th:multicutintree}
	There is a deterministic algorithm which solves the
	\textsc{Multicut In Trees}
	problem, in
	time \(n^{\mathcal{O}(1)} \cdot 2^k\) and uses \(\mathcal{O}(2^k \cdot k \cdot
	\log n)\) working space.
\end{restatable}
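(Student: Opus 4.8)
The plan is to implement the classical two-way branching algorithm for \textsc{Multicut in Trees} within the stated space budget. First I would root $T$ at an arbitrary vertex $r$ and, as preprocessing, declare the instance a \noinstance if some demand pair has $u=v$, and treat a pair $(u,v)$ as \emph{satisfied} once the current candidate deletion set disconnects $u$ from $v$. The algorithm then repeatedly selects a single demand pair and branches on at most two edges; since each branch places one new edge into the solution and decreases the budget by one, the associated search tree has depth at most $k$ and at most $2^k$ leaves. The whole difficulty is therefore twofold: justify that a two-way branch suffices, and walk this $2^k$-size tree using only $g(k)\cdot\log^{O(1)}n$ working memory.

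For the branching rule I would pick, among all currently unsatisfied pairs, a pair $(u,v)$ whose lowest common ancestor $x=\mathrm{lca}(u,v)$ is as deep as possible, and let $c_u,c_v$ be the children of $x$ on the $u$-- and $v$--sides of the $u$--$v$ path. The rule branches into deleting edge $xc_u$ or edge $xc_v$ (when $u=x$ only one such edge exists and the deletion is forced). Correctness rests on an exchange argument: any multicut must cut the $u$--$v$ path at some edge $e$, say on the $u$--side and below $c_u$, and replacing $e$ by $xc_u$ keeps the solution valid and no larger. Indeed, the only pairs that $e$ separates but $xc_u$ does not are those with both endpoints inside the subtree $T_{c_u}$; every such pair has an $\mathrm{lca}$ strictly deeper than $x$ and is therefore already separated by the edges chosen so far, by maximality of $\mathrm{depth}(x)$ among the \emph{unsatisfied} pairs. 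Hence an optimal extension deletes $xc_u$ or $xc_v$, so the two-way branch is complete, and reduction rules that drop already-separated pairs keep the recursion clean.

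To fit the space bound I would keep $T$ and $H$ in read-only storage and represent the only mutable state as the set $D$ of deleted edges, which never exceeds $k$ edges and so occupies $O(k\log n)$ bits. The primitives the algorithm needs---computing $\mathrm{lca}(u,v)$, testing whether a fixed edge lies on the $u$--$v$ path, deciding whether a pair is already separated by $D$, and scanning $H$ to locate the deepest unsatisfied $\mathrm{lca}$---are all reachability and distance queries in a forest, solvable in $O(\log n)$ auxiliary space over the read-only input, so a single search-tree node is processed with $O(k\log n)$ working memory. To traverse the tree within budget I would expand it level by level, maintaining the collection of surviving partial solutions of the current size: there are at most $2^k$ of them, each of size $O(k\log n)$, so the table fits in $O(2^k\cdot k\cdot\log n)$ space, and expanding one entry costs only $O(\log n)$ scratch plus an $n^{O(1)}$ scan. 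The algorithm accepts iff some stored set of at most $k$ edges separates all pairs of $H$, and the total running time is $n^{O(1)}\cdot 2^k$.

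The step I expect to be most delicate is not the branching arithmetic but making the exchange argument airtight \emph{together with} the logspace primitives. I must ensure that ``deeper $\mathrm{lca}$s are already cut'' is invoked only for pairs that are genuinely unsatisfied under the \emph{current} $D$, and that the rule for choosing the deepest unsatisfied pair is deterministic and identically reproducible, so that the same abstract search-tree node is always expanded into the same two children independently of how the frontier happens to be stored. Establishing that the $\mathrm{lca}$, path-membership, and ``separated-by-$D$'' queries all run in $O(\log n)$ space---so that storing $D$ explicitly is the only nontrivial memory cost---is the other point that needs care, as it is exactly what keeps the per-node overhead logarithmic and the total working space at $O(2^k\cdot k\cdot\log n)$.
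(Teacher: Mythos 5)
Your proposal is correct and follows essentially the same route as the paper: root the tree, repeatedly select an unsatisfied pair whose least common ancestor is deepest, branch on the (at most) two path edges incident to that ancestor, and realise the LCA, depth, and ``separated by the current deletion set'' queries with Reingold's \(O(\log n)\)-space connectivity algorithm, giving a search tree of at most \(2^k\) nodes with \(O(k \cdot \log n)\) state per node. The only differences are cosmetic: you walk the search tree level by level where the paper uses recursive calls (both fit the \(O(2^k \cdot k \cdot \log n)\) budget), and you spell out the exchange argument for the two-way branch, which the paper instead delegates to the cited work of Guo and Niedermeier.
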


To the best of our knowledge, these are the first results on the above mentioned
problems in the bounded space setting. While the above problems have
theoretical interests, they are also used in practical applications as
well. For example, the \(k\)-{\sc Path} problem can be used in
studying protein-protein interaction
\cite{10.1093/bioinformatics/btn163}, whereas the {\sc MaxLeaf
Subtree} problem can be used in the study of phylogenetic networks
\cite{10.1109/TCBB.2020.3040910}. While it may be difficult to find an
application of the {\sc Multicut in Trees} problem, as real world
scenario often don't occurs as tree, one can get application of the
more general {\sc Multicut in Graphs} problem. For instance, in
\cite{10.5555/1873601.1873635}, the authors use the {\sc Multicut in
Graphs} problem to study reliability in communication networks.

%Some works we need to cite:
%
%\cite{10.1007/978-3-031-22105-7_23} ``Space Limited Graph Algorithms on
%Big Data"
%
%\cite{10.1007/978-981-97-2340-9_22} ``Space-efficient graph
%kernelizations"

\section{Preliminaries}
For any positive integer, \(i\), we denote the set \(\{1, \ldots,
i\}\) by \([i]\). 
%For a positive integer \(i\), a \(j\)-permutation of
%\([i]\), where \(j \leq i\), is a permutation of any subset of \(j\) elements
%from \([i]\).
We will use standard notations from graph theory.
For a graph \(G=(V, E)\), with \(V\) as the vertex set and \(E\) as
the edge set, we will assume an order on \(V\) and hence on the
neighbourhood of every vertex in \(G\). This will help us in avoiding
many of the issues that can arise due to restriction in working space. Also, 
we use \(n\) to denote \(| V |\). We  consider \(V\) to
be the set \(\{v_1, v_2, \ldots , v_n\}\). For any \(S \subseteq V\),
we use \(G[S]\) to denote the induced subgraph of \(G\) on \(S\). And
we use \(G - S\) to denote \(G[V \setminus S]\). Also, if \(S =
\{v\}\), for some \(v \in V\), then we simply write \(G - v\), instead
of \(G - \{v\}\).
For any vertex \(v \in V\), \(N(v) : = \{w \in V | vw \in E\}\) and \(N[v] :
= N(v) \cup \{v\}\). A path on \(k\) vertices will have length \(k -
1\), i.e., the number of edges in it.

A celebrated result of Reingold~\cite{10.1145/1060590.1060647}, gives a
\(\mathcal{O} (\text{log } n)\) space, polynomial time deterministic algorithm for
undirected st-connectivity, where $s$ and $t$ are two vertices in an input-graph $G$ on $n$ vertices. Let's call it \(\mathcal{A}_{con}\).

\textbf{Colour Coding:}
Colour Coding is an algorithmic technique to detect if a given
input graph has a smaller subgraph, isomorphic to
another graph 
(\cite{10.1007/978-3-642-11269-0_1}, \cite{10.1145/195058.195179}). In other words, given a pattern
\(k\)-vertex ``pattern" graph \(H\) and an \(n\)-vertex input graph
\(G\), the goal is to find a subgraph of \(G\) isomorphic to \(H\)
(\cite{DBLP:books/sp/CyganFKLMPPS15}). While this techniqe can be used
to efficiently detect small subgraphs like paths and cycles, it may not
result in efficient algorithms for arbitrary \(H\); for example, if
\(H\) is a complete graph (the \(k\)-{\sc Clique} problem). 

We implement the technique by the standard method of using a family of \emph{universal hash functions}. In particular, if \(G = (V, E)\) is an input graph with
\(n\) vertices and the vertices are labeled as \(\{v_1, v_2, \cdots ,
v_n\}\), then when we need to find a vertex subset of size \(k\), we
use the following family of hash functions.
Let \(p\) be a prime number greater than \(n\). We define the
following family of hash functions.

\begin{multline}
	\label{eq:hash}
	\mathcal{H}_u := \{h_{a,b} : \{1, \ldots , n\} \rightarrow
		\{1, \ldots , k^2\} \mid 
		a, b \in \{0, \ldots , p - 1\}, a \geq 1,
        %b \in \{0, 1, \ldots , p - 1\},
	\forall i \in [n], \\
    h_{a,b}(i) = ((ai + b) \text{mod } p) \text{mod } k^2\}
\end{multline}

Given \(n\) and \(p\), each of the above functions can be evaluated in
\(\mathcal{O}(n)\) time and \(\mathcal{O}(\text{log } n)\) space.
The set \([k^2]\) is called the set of colours.
Each member of \(\mathcal{H}_u\) is a called a \textit{colouring
function} and given a colouring function, \(h_{a,b}\), for each \(i
\in [n]\), \(h_{a,b}(i)\) is called the colour of the vertex \(v_i\).

A subset \(S\) of \(V\) will be called \textit{properly coloured} or
\textit{colourful} under a given colouring function, if every member
of \(S\) has a distinct colour, i.e., the function is injective when
restricted to \(S\). The family \(\mathcal{H}_u\) has the 
property of being a universal hash family~\cite{CARTER1979143}.
This means that the probability that a given subset of vertices, say
\(S\), where \(| S | \leq k\), is \emph{not} colourful under a
colouring function chosen uniformly at random from the family \(\mathcal{H}_u\), is at most
\(1/2\) (see,
e.g.,~\cite{10.1007/978-3-031-22105-7_23,10.5555/1614191}). Thus,
there is some function from the family \(\mathcal{H}_u\), which when
restricted to \(S\) will be injective.
Given the prime number $p$, $n$ and $k$, we can enumerate the functions in \(\mathcal{H}_u\), lexicographically with respect to the pair \((a, b)\), in time \(\mathcal{O}(n^3)\) and space \(\mathcal{O}(\text{log } n)\), deterministically.
%\todo[inline]{Clarify this probability? What is it over?}

\textbf{How do we get a sufficiently large \(p\), deterministically, given the space restrictions?} 
By Bertrand's Postulate (later Theorem), for any \(n \geq 2\) there exists a prime \(p\), such that \(n < p < 2n\). Therefore, by simply testing the primality of each integer between $n+1$ and $2n-1$, we can obtain a prime $p > n$. 
As we will be dealing with an integer of absolute-value at most \(2n\), thus the number of bits required to represent these integers will be at most \(\text{log } 2n\) which is \(\mathcal{O}(\text{log } n)\). To test the primality of a number $q$, we simply check if any integer $r < q$ divides it, which can be done in ${\cal O}(q \log n)$ time and ${\cal O}(\log q)$ space.
Thus, by simply testing each integer between $n+1$ and $2n-1$, we have a deterministic algorithm that runs in time ${\cal O}(n^2 \log n)$   
%\textcolour{red}{Why polynomial in \(n\)? Isn't the running time also
%polynomial in \(\log{n}\)?} 
and space ${\cal O}(\log n)$, and outputs a prime \(p > n\).

The celebrated AKS primality test by Agrawal et al.~\cite{b68c33ca-3366-3b13-8901-69e76cc88da6} gives a
deterministic algorithm to check if an input integer $q$ is prime or not, where the running time (and hence the space complexity) is $\log^{{\cal O}(1)} q$.
%On the other-hand, the popular Miller-Rabin Primality Test~\cite{rabin1980probabilistic}, is randomized but requires $O(\log q)$ space and $\log^{O(1)} q$ time.
%As we will be dealing with an integer of absolute-value at most \(2n\), thus the number of bits required to represent these integers will be at most \(\text{log } 2n\) which is \(\mathcal{O}(\text{log } n)\). 
%Thus, by simply testing each integer between $n+1$ and $2n$, we have a randomized algorithm that runs in time $n \cdot \log^{O(1)} n$   
%\textcolour{red}{Why polynomial in \(n\)? Isn't the running time also
%polynomial in \(\log{n}\)?}  and space $O(\log n)$, and outputs a prime \(p > n\).
%We remark that it is alluded, but not explcitly stated that the
%AKS-primality test requires only ${\cal O}(\log q)$ space. 
%If this is
%true, 
Then we also have a deterministic algorithm to obtain a prime
number $p > n$ in time $n \cdot \log^{{\cal O}(1)} n$ and space ${\cal O}(\log n)$.
%This will de-randomise the process.

%We use the following lemma to obtain a lower-bound for the success
%probability of the algorithms that we design using colour coding.
%
%\begin{lemma}
%	\label{lm:successcount}
%	Fix a constant \(\epsilon \in (0, 1)\). Let \(G\) be the input graph and
%	\(S \subseteq V(G)\), such that \(| S | = k\).
%	Then the probability that \(| S |\) is properly coloured by
%	using randomly chosen functions from \(\mathcal{H}_u\), is at least \(1 - \epsilon\).
%%    \textcolour{red}{This statement cannot \textbf{possibly} be true! Please fix it.}
%\end{lemma}
%
%\begin{proof}
%	Let's pick \(\ceil{\text{log } \frac{1}{\epsilon}}\) functions
%	from \(\mathcal{H}_u\) at random.
%	As noted earlier, the
%	probablity that \(| S |\) is not colourful under a randomly
%	chosen colouring
%	function is at most \(1/2\). Thus, the probability that \(S\)
%	is not properly coloured under any of the randomly chosen
%	\(\ceil{\text{log } \frac{1}{\epsilon}}\) functions is at most \(\epsilon\). Hence, the above claim is
%	proved.
%\end{proof}

\textbf{Deleting vertices and edges:} Typically when deleting vertices
or edges from a graph, we make a copy of the given graph with those
vertices or edges deleted. However, in our setting we cannot simply
make a copy of a subgraph of the input graph, unless we can guarantee
its size to be bounded by some function of \(k\) only. If on the other
hand, as we shall encounter going forward, we can guarantee that the
size of the set of deleted vertices (or edges) can be bounded by a
function of \(k\) only, then we can \textit{simulate} the deletion,
i.e., we can make the algorithm act as if the set to be deleted has
actually been deleted. 

Let \(S \subseteq V\) be a set of vertices that
need to be deleted and \(| S | \leq \alpha(k)\), where \(\alpha\) is an
increasing function from \(\mathbb{N}\) to \(\mathbb{N}\). As its size
is bounded by a function of \(k\) only, so we can explicitly keep a
copy of \(S\) in our working space and mark it as deleted. Thus, if we
need to select a vertex from \(G - S\), we go through elements in \(V(G)\) and
check if they are from \(S\) or not.
%If any routine or a process or a loop 
%needs to consider \(S\) deleted, then all we do is that 
%%\textcolour{red}{This sentence is not clear. What are we trying to say here?} 
%process all vertex queries or operations on \(G\) except the ones in \(S\) and process all the
%edge queries or operations on \(G\) except the ones which have at least one endpoint
%in \(S\). 
Similarly, if \(F \subseteq E\) be a set of edges that needs to be
deleted, such that \(| F | \leq \beta(k)\), where \(\beta\) is an
increasing function from  \(\mathbb{N}\) to \(\mathbb{N}\), then we can simulate the 
deletion by storing a copy of \(F\) and marking it as deleted. If
\(G'\) is the subgraph of \(G\) which is obtained after deletion of
\(F\) from \(E\), then we can simply select edges of \(G'\) by
accessing the adjacency matrix of \(G\) via an oracle which checks if
and edge is in \(F\) or not. This can be clearly implemented in
\(\mathcal{O}(n^2)\) time and \(\mathcal{O}(\beta(k) \cdot \text{log } n)\)
space.
%not processing 
%%\textcolour{red}{What does it mean to ``not process'' a query? Could we spell this out? I think making this clearer will also address my previous comment.} 
%any edge queries or operations on edges from \(F\).

We will now proceed to describe the problems along with the
algorithms. 
%Details of the computations, including the pseudocodes and
%the proofs are provided in the respective sections of the appendix.

\section{\(k\)-PATH}

\noindent\fbox{\begin{minipage}{\textwidth}
Input: An undirected graph \(G = (V,E)\) and an
integer \(k\).	

Parameter: \(k\)

Question: Does \(G\) have a path on at least \(k\) vertices?
\end{minipage}}

%\begin{restatable}[]{theorem}{kpath}
%\label{th:kpath}
%	There is a constant probability, one-sided error, randomised algorithm that solves \(k\)-PATH
%	problem, which runs in time \(n^{\mathcal{O}(1)} \cdot 2^{k^2}
%	\cdot k!\) and uses 
%	\((\text{log } n)^{\mathcal{O}(1)} \cdot 2^{k^2} \cdot k!
%	\cdot k\) working space.
%\end{restatable}

\kpath*

%We will prove \cref{th:kpath} in detail in \cref{longpath}.
We will use the colour coding technique to design our algorithm. If
\(G\) is a {\sc Yes}-instance, then there exists some subset \(S\) of
\(V\) such that \(| S | = k\) and there is a path of length \(k - 1\)
on vertices of \(S\). As mentioned earlier, there exists some
colouring function \(h_{a,b}\) from the family \(\mathcal{H}_u\), such
that \(h_{a,b}\) will be injective when restricted to \(S\). Thus, we
can use the functions from \(\mathcal{H}_u\) one by one to identify
\(S\). And we have already seen that it can be done in
\(\mathcal{O}(n^3)\) time and \(\mathcal{O}(\log n)\) space.
We fix one such colouring function, and describe the algorithm with respect to it.
%By Lemma~\ref{lm:successcount}, we randomly pick a set of $\log \frac{1}{\epsilon}$ colouring functions from ${\cal H}_u$ and with probability at least $1-\epsilon$, one of them properly colours some $k$-path of the graph. 

\textbf{Description of the main algorithm:} \label{alg:mainkpath} We enumerate all
possible permutations of all possible subsets of \(k\) elements chosen from the set \(\{1, \ldots , k^2\}\). Let us consider one such permutation \(\{c_1, \ldots , c_k\}\).  
For each \(i \in [k]\), we define \(n_i =
|\{\text{Vertices of } G \text{ which have the colour } c_i\}|\). If for some \(i \in [k]\), \(n_i = 0\), then that is not a valid permutation and we move on to the next permutation of \(k\) colours.

Construct an auxiliary path \(\mathcal{P}\), such that
\(V(\mathcal{P}) = \{c_i \mid 1 \leq i \leq k\}\) and \(E(\mathcal{P}) = \{c_i c_{i + 1} \mid 1 \leq i \leq k - 1\}\) (a path on the \(k\) colours). We will use \cref{alg:findpath} to find a colourful path in \(G\) of size at least \(k\), if one such path exists in $G$.
%such that \(V(P) = \{w_i | 1 \leq i \leq k\}\), \(E(P) = \{w_i w_{i + 1} | 1 \leq i \leq k - 1\}\) and for each \(i \in [k]\), the colour of \(w_i\) is \(c_i\). 
If \cref{alg:findpath} returns {\sc Yes} for any input, then we return {\sc Yes}. Otherwise, if \cref{alg:findpath} return {\sc No} for all permutations of \(k\)-subsets of \([k^2]\), then we return {\sc No}.

In \cref{alg:findpath}, we (implicitly) construct an auxiliary graph $G^\star$ using the path $\cal P$ as follows: \(V(G^{\star}) \leftarrow \{s,t\} \cup V(G)\) where $s,t$ are two new vertices,  \(N(s) \leftarrow \{v \in V(G) \mid \text{colour of } v \text{ is } c_1\}\), \(N(t) \leftarrow \{v \in V(G) \mid \text{colour of } v \text{ is } c_k\}\),
and \(E(G^{\star}) \leftarrow \{sv \mid v \in N(s)\} \cup \{tv \mid v
\in N(t)\} \cup_{i = 1} ^{k - 1} \{vw \mid vw \in E(G) \land
\text{colour of \(v\) is } c_i \land \text{ colour of \(w\) is } c_{i + 1}\}\). 

Observe that, any path between $s$ and $t$ in $G^\star$, if one exists,  has at least $k$ internal vertices, by construction. Otherwise, if there is a path with fewer vertices, then as $s$ is adjacent to vertices of colour $c_1$ only and $t$ is adjacent to vertices of colour $c_k$ only, hence there must be an edge in this path between two vertices of colours $c_i$ and $c_j$ where $|i-j|\geq 2$, which contradicts the construction.

\begin{lemma}
\label{lm:nokpath1}
If \cref{alg:findpath} is correct and \(G\) is a {\sc No}-instance, then
the above algorithm will return {\sc No}.
\end{lemma}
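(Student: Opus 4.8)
The plan is to prove the contrapositive: I will show that if the main algorithm returns \textsc{Yes}, then \(G\) contains a path on at least \(k\) vertices, i.e.\ \(G\) is a \textsc{Yes}-instance. Since this implication is logically equivalent to the stated one, establishing it suffices.

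First I would unpack exactly when the main algorithm outputs \textsc{Yes}. By its description this happens precisely when \cref{alg:findpath} returns \textsc{Yes} on at least one of the enumerated permutations \(\{c_1, \ldots, c_k\}\) of the \(k\)-subsets of \([k^2]\). I would fix one such permutation together with its associated auxiliary graph \(G^{\star}\). Since \cref{alg:findpath} is assumed correct, a \textsc{Yes} answer on this input certifies the existence of an \(s\)--\(t\) path in \(G^{\star}\).

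Next I would invoke the structural observation already established just above the lemma statement: any \(s\)--\(t\) path in \(G^{\star}\) has at least \(k\) internal vertices. I would then argue that these internal vertices yield a path on at least \(k\) vertices in \(G\) itself. They are all vertices of \(G\), since only \(s\) and \(t\) are new; they are pairwise distinct because the \(s\)--\(t\) path is simple; and every edge joining two consecutive internal vertices lies in \(E(G)\) by the definition of \(E(G^{\star})\). Hence \(G\) contains a path on at least \(k\) vertices and is a \textsc{Yes}-instance, which completes the contrapositive.

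There is no substantial obstacle here, as the heavy lifting is deferred to the (assumed) correctness of \cref{alg:findpath} and to the already-proven observation about \(G^{\star}\). The only point requiring a little care is verifying that the internal vertices form a genuine \emph{simple} path in \(G\) rather than a mere walk, which is exactly where simplicity of the \(s\)--\(t\) path and the precise edge-set definition of \(G^{\star}\) are used. Equivalently, one may phrase the argument directly: if \(G\) is a \textsc{No}-instance, then no \(G^{\star}\) can contain an \(s\)--\(t\) path—otherwise the observation would produce a forbidden \(k\)-vertex path in \(G\)—so \cref{alg:findpath} returns \textsc{No} on every permutation and the main algorithm outputs \textsc{No}.
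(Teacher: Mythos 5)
Your proof is correct and rests on the same ingredients as the paper's: the correspondence between $s$--$t$ paths in $G^{\star}$ and colourful $k$-vertex paths in $G$, together with the observation about internal vertices stated just before the lemma. The paper merely runs the argument in the direct rather than contrapositive direction (no $k$-path in $G$ $\Rightarrow$ no properly coloured path $\Rightarrow$ no $s$--$t$ path in $G^{\star}$ $\Rightarrow$ $\mathcal{A}_{con}$ and hence the main algorithm return \textsc{No}), so the two proofs are essentially the same.
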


\begin{proof}
	Suppose, the graph \(G\) is a {\sc No}-instance, i.e.
	we don't have a path of length at least \(k - 1\) in \(G\);
	which implies that we cannot have a colourful path of length
	at least \(k - 1\). Thus, for any permutation of any
	\(k\)-subset of the \(k^2\) colours, one cannot get any
	properly coloured path of length at least $k - 1$, starting
	from a vertex of the the first colour and ending at a vertex of
	the last colour. Thus, there is no path from $s$ to $t$ in
	$G^\star$. Hence, the algorithm \(\mathcal{A}_{con}\) will return {\sc No}. So our algorithm will correctly return {\sc No} as the answer.
	
\end{proof}

\begin{algorithm}
\caption{Finding a colourful path}\label{alg:findpath}
\begin{algorithmic}[1]
	\STATE \underline{\textbf{FindAPath(\(\mathcal{P}\))}}
	\STATE Add two vertices \(s\) and \(t\) which are not in the
	input graph \(G\).
	\STATE \(N(s) \leftarrow \{v \in V(G) \mid \text{colour of } v \text{ is } c_1\}\)
	\STATE \(N(t) \leftarrow \{v \in V(G) \mid \text{colour of } v \text{ is } c_k\}\)
	\STATE Construct an undirected graph \(G^{\star}\) as follows.
	\STATE \(V(G^{\star}) \leftarrow \{s,t\} \cup V(G)\)
	\STATE \(E(G^{\star}) \leftarrow \{sv \mid v \in N(s)\} \cup \{tv \mid v \in N(t)\} \cup_{i = 1} ^{k - 1} \{vw \mid vw \in E(G) \land \text{colour of v is } c_i \land \text{ colour of w is } c_{i + 1}\}\)
	\STATE Pass the information of \(G^{\star}\) to
	\(\mathcal{A}_{con}\) to check for the connectivity of \(s\)
	and \(t\)
	\STATE If \(s\) and \(t\) are connected in \(G^{\star}\), then
	return {\sc Yes}
	\STATE If \(s\) and \(t\) are not connected in \(G^{\star}\), then
	return {\sc No}
\end{algorithmic}
\end{algorithm}

Note that in \cref{alg:findpath}, the contruction of \(G^{\star}\) is not done
explicitly, as we do not have enough space for it. 
Instead, we provide access to their adjacency matrix via an oracle that can be implemented in $O(\log n)$-space and polynomial time. We first
introduce vertices \(s\) and \(t\) which are not already in \(V(G)\). Their neighbourhoods can be determined in \(\mathcal{O}(n)\) time and \(\mathcal{O}(\log n)\) space by scanning through \(V(G)\) and determining the colour of each vertex. As for determining the other edges, then note that
those vertices which do not have the colours in the input auxiliary path \(\mathcal{P}\) are considered isolated. And the vertices which have a colour, say \(c^{\star}\), from the set \(\{c_1, c_2, \ldots, c_{k - 1}\}\) then we only consider their neighbours which have the colour \(c^{\star} + 1\). 
%Vertices of colour \(c_k\) are reported to only have neighbour \(t\). 
It is with this oracle access to $G^\star$ that we call \(\mathcal{A}_{con}\) on \(G^{\star}, s\) and \(t\).

We prove the correctness of \cref{alg:findpath} in the following
lemma.

\begin{lemma}
	\label{lm:colourpath}
	There exists a path on \(k\)-vertices in \(G\), with the same
	colour configuration as \(\mathcal{P}\) if and only if, \(s\)
	and \(t\) are connected in \(G^{\star}\).
\end{lemma}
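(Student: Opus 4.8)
The plan is to prove the biconditional in \cref{lm:colourpath} by establishing each direction separately, translating between paths on $k$ vertices in $G$ whose colours match $\mathcal{P}$ and $s$--$t$ paths in the auxiliary graph $G^\star$. The key structural fact, already observed in the excerpt, is that every $s$--$t$ path in $G^\star$ must have exactly $k$ internal vertices with colours $c_1, c_2, \ldots, c_k$ appearing in that order; I would invoke this to handle the reverse direction cleanly.

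\textbf{Forward direction.} Suppose $G$ contains a path $P = u_1 u_2 \cdots u_k$ on $k$ vertices whose colour configuration matches $\mathcal{P}$, i.e.\ the colour of $u_i$ is $c_i$ for each $i \in [k]$. First I would note that since $u_1$ has colour $c_1$, it lies in $N(s)$, so $su_1 \in E(G^\star)$; similarly $u_k$ has colour $c_k$, so $u_k \in N(t)$ and $tu_k \in E(G^\star)$. Then for each consecutive pair $u_i u_{i+1}$ with $1 \leq i \leq k-1$, the edge $u_i u_{i+1}$ is in $E(G)$ and the colours are $c_i$ and $c_{i+1}$ respectively, so by the definition of $E(G^\star)$ this edge is retained in $G^\star$. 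Concatenating, $s\, u_1\, u_2 \cdots u_k\, t$ is a walk in $G^\star$ from $s$ to $t$; since the $u_i$ have distinct colours (the colouring is injective on $S$, or simply because the $c_i$ are distinct by being a permutation of a $k$-subset) they are distinct vertices, so this is an $s$--$t$ path, establishing connectivity.

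\textbf{Reverse direction.} Suppose $s$ and $t$ are connected in $G^\star$, so there is an $s$--$t$ path $Q$. I would appeal to the observation preceding the lemma: because $s$ is adjacent only to colour-$c_1$ vertices, $t$ only to colour-$c_k$ vertices, and every other edge of $G^\star$ joins a colour-$c_i$ vertex to a colour-$c_{i+1}$ vertex, the internal vertices of $Q$ must have colours $c_1, \ldots, c_k$ in order, forcing exactly $k$ internal vertices $u_1, \ldots, u_k$. Stripping $s$ and $t$ yields the path $u_1 \cdots u_k$; each internal edge $u_i u_{i+1}$ came from an edge of $E(G)$ by construction, so this is a genuine path in $G$ on $k$ vertices matching the colour configuration of $\mathcal{P}$.

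\textbf{Main obstacle.} The delicate point is the reverse direction, specifically arguing that $Q$ has \emph{exactly} $k$ internal vertices rather than merely deducing that some valid path exists. One must rule out that $Q$ revisits colours or short-circuits; the cleanest route is to use the colour-layering structure to show that consecutive internal vertices strictly increase in colour-index, which both bounds the length at $k$ and forces each colour to appear exactly once. Care is also needed because $G^\star$ is only accessed implicitly through the oracle, so I would phrase the argument purely in terms of the defining edge relation of $E(G^\star)$ rather than any explicit representation, ensuring the proof is agnostic to how connectivity is tested by $\mathcal{A}_{con}$.
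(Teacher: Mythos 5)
Your forward direction is correct and is essentially the paper's argument. The problem is the reverse direction. Your key step --- that consecutive internal vertices of an $s$--$t$ path in $G^{\star}$ strictly increase in colour index, forcing exactly $k$ internal vertices with colours $c_1,\ldots,c_k$ in order --- is false, because $G^{\star}$ is undirected: an edge joining a $c_i$-vertex to a $c_{i+1}$-vertex can be traversed in either direction, so a path may zigzag between adjacent colour classes. Concretely, take $k=4$ and let $G$ be the path $x_1x_2x_3x_4x_5x_6$ with colours $c_1,c_2,c_3,c_2,c_3,c_4$ in that order. Every consecutive pair of vertices has consecutive colours, so $G^{\star}$ is exactly the path $s\,x_1x_2x_3x_4x_5x_6\,t$ and $s,t$ are connected; yet no path on $4$ vertices in $G$ has colour configuration $c_1,c_2,c_3,c_4$ (check the three $4$-vertex subpaths and their reversals). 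So the ``only if'' direction of \cref{lm:colourpath}, read literally as you read it, is not merely hard to prove --- it is false, and no repair of your monotonicity argument can save it. Note also that the observation preceding the lemma, which you cite as giving ``exactly $k$ internal vertices with colours in order,'' in fact only gives ``at least $k$ internal vertices'': the colour index changes by at most one per edge, so at least $k-1$ steps are needed to climb from colour $c_1$ to colour $c_k$.

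The paper's own proof quietly avoids this trap by proving a weaker conclusion in the reverse direction: any $s$--$t$ path in $G^{\star}$ has at least $k$ internal vertices, and since all its internal vertices and edges lie in $G$, the graph $G$ contains a path on at least $k$ vertices --- with no claim that its colour configuration matches $\mathcal{P}$. That weaker statement is all that the correctness of the main algorithm requires (\cref{lm:nokpath1} only needs that $s$--$t$ connectivity certifies a \textsc{Yes}-instance, while your forward direction handles the other side), and it is the statement one should prove; the lemma's wording overstates what holds. Your proposal's ``main obstacle'' paragraph correctly identified the delicate point, but the route you chose to resolve it (colour-layer monotonicity) is exactly the step that fails.
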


\begin{proof}
	Suppose, \(G\) has path \(P\) with the same colour
	configuration as \(\mathcal{P}\), i.e., \(V(P) = \{w_i \mid 1 \leq i \leq k\}\), \(E(P) =
\{w_i w_{i + 1} | 1 \leq i \leq k - 1\}\) and for each \(i \in [k]\),
the colour of \(w_i\) is \(c_i\). Then, by construction there exists
a path from \(s\) to \(t\) in \(G^{\star}\), as \(w_1 \in N(s)\) and
\(w_k \in N(t)\). Therefore, the algorithm \(\mathcal{A}_{con}\) on \(G^{\star}, s\) and \(t\) will return {\sc Yes}.

Conversely, suppose that \(s\) and \(t\) are connected in
\(G^{\star}\). Then, there exists a path between \(s\) and \(t\) in \(G^{\star}\), say \(P'\). Recall that, by the construction of $G^\star$, any path from $s$ to $t$ has at least $k$ internal vertices. Observe that, the internal vertices and edges of $P'$ are also present in $G$, and thus we obtain a path of length at least $k$ in $G$. 

%As the only neighbours of \(s\) are vertices of colour \(c_1\), then there exists a vertex, say \(w\), of colour \(c_1\) in \(P'\). Similarly, there exists a vertex of colour \(c_k\) in \(P'\). By construction, for a vertex of colour from the list \(\langle c_1, \ldots, c_{k - 1} \rangle\) the only neighbours are those with the next colour in the list. Thus, if there exists a vertex of colour, say \(c^{\star}\) from the set \(\{c_1, \ldots, c_{k - 1}\}\), in the path \(P'\), then there exists a vertex of \(c^{\star} + 1\) in \(P'\).Thus, \(P' - \{s, t\}\) is a path in \(G\) with the same colour configuration as \(\mathcal{P}\).

\end{proof}

\begin{lemma}
	\label{lm:tsp11}
	A single run of \cref{alg:findpath} takes
	\(n^{\mathcal{O}(1)}\) time and \(\mathcal{O}(k \cdot
	\log n)\) space.
\end{lemma}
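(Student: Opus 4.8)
# Proof Proposal for Lemma~\ref{lm:tsp11}

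The plan is to account separately for the time and space consumed by the three conceptual stages of a single invocation of \cref{alg:findpath}: (1) introducing the vertices $s, t$ and setting up oracle access to their neighbourhoods, (2) providing oracle access to the edge set of $G^\star$, and (3) the call to Reingold's connectivity algorithm $\mathcal{A}_{con}$. The central observation, already foreshadowed in the discussion preceding the lemma, is that $G^\star$ is \emph{never materialised}; it is only ever accessed through an $\mathcal{O}(\log n)$-space oracle. Thus the whole lemma reduces to verifying that each oracle query is cheap and that $\mathcal{A}_{con}$ inherits the promised bounds when run on top of such an oracle.

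First I would bound the oracle for adjacency queries in $G^\star$. To answer ``is $xy \in E(G^\star)$?'' we must handle three cases. If one endpoint is $s$ (resp.\ $t$), we evaluate $h_{a,b}$ on the other endpoint and check whether its colour equals $c_1$ (resp.\ $c_k$); by the preliminaries each hash evaluation runs in $\mathcal{O}(n)$ time and $\mathcal{O}(\log n)$ space. Otherwise both endpoints lie in $V(G)$, and we accept iff $xy \in E(G)$ and the colours of $x$ and $y$ are consecutive entries $c_i, c_{i+1}$ of the permutation $\mathcal{P}$. This requires one adjacency-matrix lookup in $G$ plus two hash evaluations and a scan of the stored sequence $c_1, \ldots, c_k$, which occupies only $\mathcal{O}(k \log n)$ space since each colour is an integer in $[k^2]$. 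Each such query therefore costs $n^{\mathcal{O}(1)}$ time and $\mathcal{O}(k \log n)$ space, and critically the space is \emph{reused} across queries rather than accumulated.

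Next I would invoke Reingold's theorem: $\mathcal{A}_{con}$ decides $st$-connectivity in a graph on $N$ vertices using $\mathcal{O}(\log N)$ working space and polynomial time. Here $N = n + 2$, so $\mathcal{A}_{con}$ itself uses $\mathcal{O}(\log n)$ space and $n^{\mathcal{O}(1)}$ time, \emph{provided} it is charged only for its own workspace and each of its adjacency probes is serviced by the oracle above. The key composition principle is that when a space-$s_1(n)$ algorithm makes oracle calls each answered in space $s_2(n)$ (with space reused per call), the total space is $\mathcal{O}(s_1(n) + s_2(n))$, not their product. Applying this with $s_1 = \mathcal{O}(\log n)$ and $s_2 = \mathcal{O}(k \log n)$ gives an overall working-space bound of $\mathcal{O}(k \log n)$, and since $\mathcal{A}_{con}$ makes at most $n^{\mathcal{O}(1)}$ probes each costing $n^{\mathcal{O}(1)}$ time, the total running time is $n^{\mathcal{O}(1)}$.

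The main obstacle, and the step deserving the most care, is justifying this space-composition principle rigorously under the bounded-space model: one must argue that $\mathcal{A}_{con}$ can be run so that it stores only a pointer/index into $G^\star$ and defers every structural question to the oracle, and that the oracle's scratch space is released between queries. Since the oracle is stateless (it recomputes colours on demand from $h_{a,b}$ and reads only the immutable inputs $G$ and $\mathcal{P}$), this reuse is clean and no hidden $n$-sized structure is ever stored. Combining the two bounds yields the claimed $n^{\mathcal{O}(1)}$ time and $\mathcal{O}(k \log n)$ space for one run of \cref{alg:findpath}.
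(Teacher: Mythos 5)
Your proposal is correct and takes essentially the same route as the paper's proof: both bound a single run by charging polynomial time and $\mathcal{O}(\log n)$ space to Reingold's algorithm $\mathcal{A}_{con}$ run over the implicitly represented $G^{\star}$, plus $\mathcal{O}(k \cdot \log n)$ space for the colour sequence $\mathcal{P}$ and the data associated with $s$ and $t$. Your write-up is in fact more careful than the paper's terse argument, since you make explicit both the adjacency-oracle implementation and the additive (rather than multiplicative) space-composition principle for stateless oracle calls, both of which the paper leaves implicit.
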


\begin{proof}
	We know from \cite{10.1145/1060590.1060647}, that a call to
	\(\mathcal{A}_{con}\) will take up polynomial time and
	\(\mathcal{O}(\log n)\) space. Apart from that, the rest
	of \cref{alg:findpath} clearly takes up polynomial time. As
	for space then we need \(\mathcal{O}(k \cdot \log n)\)
	for the description of \(\mathcal{P}\), the vertices \(s\) and
	\(t\) and their neighbours.
\end{proof}

\begin{lemma}
	\label{lm:tsp12}
	The above main algorithm takes \(n^{\mathcal{O}(1)} \cdot
	2^{k^2} \cdot k!\) time and \(\mathcal{O}(2^{k^2} \cdot k!
	\cdot k \cdot \log n)\) space.
\end{lemma}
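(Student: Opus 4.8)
The plan is to treat this as a straightforward accounting of the two nested loops in the main algorithm: the outer loop over colouring functions from $\mathcal{H}_u$, and the inner loop over permutations of $k$-subsets of the colour set $[k^2]$, each iteration of which invokes \cref{alg:findpath}.

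First I would bound the number of inner iterations. The number of $k$-element subsets of $[k^2]$ is $\binom{k^2}{k}$, and each such subset admits $k!$ orderings, so the inner loop runs at most $\binom{k^2}{k}\cdot k!$ times. Using the standard estimate $\binom{k^2}{k}\le 2^{k^2}$, this is at most $2^{k^2}\cdot k!$. Next I would invoke \cref{lm:tsp11}: each call to \cref{alg:findpath} costs $n^{\mathcal{O}(1)}$ time, and the auxiliary work per permutation (building $\mathcal{P}$, and computing each $n_i$ by a single scan of $V(G)$) is also polynomial and hence absorbed. The outer loop ranges over the members of $\mathcal{H}_u$, of which there are $\mathcal{O}(n^2)$, and enumerating them costs $n^{\mathcal{O}(1)}$ time by the discussion preceding the algorithm. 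Multiplying the iteration counts by the per-iteration cost gives the claimed time bound $n^{\mathcal{O}(1)}\cdot 2^{k^2}\cdot k!$, since the polynomial factors from both loops merge into the single $n^{\mathcal{O}(1)}$ term.

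For the space bound I would separate the space needed to drive the enumeration from the space used inside a single call. By \cref{lm:tsp11}, one run of \cref{alg:findpath} uses $\mathcal{O}(k\log n)$ space, and this space is reused across iterations. The remaining contribution comes from materialising the $2^{k^2}\cdot k!$ permutations, each described by $k$ colours stored in $\mathcal{O}(\log n)$ bits apiece; maintaining the enumeration therefore costs $\mathcal{O}(2^{k^2}\cdot k!\cdot k\cdot\log n)$ space, which dominates and yields exactly the stated bound. The colouring-function loop contributes only $\mathcal{O}(\log n)$ additional space and is negligible.

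The only genuinely delicate point, which I expect to be the main obstacle, is the space accounting: one must fix precisely how the permutations are enumerated. If they are listed and stored en masse, the $2^{k^2}\cdot k!\cdot k\cdot\log n$ term arises directly; if instead one generates each permutation on the fly in a canonical (e.g.\ lexicographic) order, only $\mathcal{O}(k\log k)$ extra space is needed to hold the current permutation, and the claim as stated is then a safe over-estimate. I anticipate this bookkeeping—rather than the arithmetic of the counting—to be the thing to get right, together with verifying that the two polynomial-time loops compose into a single $n^{\mathcal{O}(1)}$ factor.
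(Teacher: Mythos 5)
Your proposal is correct and follows essentially the same route as the paper's own proof: bound the number of permutations of $k$-subsets of $[k^2]$ by $2^{k^2}\cdot k!$, charge each one call to \cref{alg:findpath} at the cost given by \cref{lm:tsp11}, and attribute the dominant space term to storing the enumerated permutations at $\mathcal{O}(k\log n)$ bits each. The only differences are refinements in your favour: you explicitly fold in the outer loop over the $\mathcal{O}(n^2)$ colouring functions of $\mathcal{H}_u$ (which the paper leaves implicit, having fixed one colouring before stating the lemma), and you correctly observe that generating permutations on the fly would make the stated space bound a generous over-estimate, whereas the paper simply lists them all.
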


\begin{proof}
	There are most \(2^{k^2} \cdot k!\) possible permutations of
	\(k\)-elements chosen from the set \([k^2]\) and hence that
	many choices for the auxiliary path \(\mathcal{P}\). We list all of
	them. The space needed for that is \(2^{k^2} \cdot k! \cdot
	\log n\) and time is \(2^{k^2} \cdot k!\).

	For every choice of \(\mathcal{P}\), we need to call
	\cref{alg:findpath} at most once. By \cref{lm:tsp11}, the
	total time taken is \(n^{\mathcal{O}(1)} \cdot
	2^{k^2} \cdot k!\) and the space used it \(\mathcal{O}(2^{k^2} \cdot k!
	\cdot k \cdot \log n)\).
\end{proof}

%By \cref{lm:successcount}, for a constant \(\epsilon \in (0,1)\), one
%can make a \(k\)-path in \(G\) to be colourful using the family
%\(\mathcal{H}_u\), with probability at least \(1 - \epsilon\). 
We thus have a proof of \cref{th:kpath}.

\kpath*

\section{MAXLEAF SUBTREE}\label{maxleaf-l}

\noindent\fbox{\begin{minipage}{\textwidth}
Input: An undirected graph \(G = (V,E)\) and an integer \(k\).	

Parameter: \(k\)

Question: Does \(G\) have a subtree with at least \(k\) leaves?
\end{minipage}}

\maxleafsubtree*

We will adapt the algorithm used in \cite{kneis2011new} to our bounded
space setting. 
%The pesudcode along with other relevant details will be provided in \cref{maxleaf}.

Before we describe the algorithm, we will introduce certain terms from
\cite{kneis2011new}, which will be needed.

Given a rooted tree \(T\), we denote its root by \(root(T)\). The set
of leaves of \(T\) will be denoted by \(leaves(T)\). A tree
with \(k\) leaves will be called \(k\)-leaf tree. A non-leaf vertex of
a tree is called an \(inner ~ vertex\). For a graph \(G\) and a rooted
subtree \(T\), we call \(T\) to be \textit{inner-maximal} rooted tree
if for every inner vertex \(v\) of \(T\), \(N_{G}(v) \subseteq V(T)\).

\begin{itemize}
	\item For rooted trees \(T\) and \(T'\), we say that \(T'\)
		extends \(T\), denoted by \(T' \succeq T\), iff
		\(root(T') = root(T)\) and \(T\) is an induced
		subgraph of \(T'\). We write \(T' \succ T\), when \(T'
		\succeq T\) and \(T' \neq T\).

	\item Given a rooted tree \(T\), the algorithm distinguishes
		its leaves into two kinds. The \(red\) leaves \(R\) of
		\(T\), are those which will remain as leaves in any
		other tree which extends \(T\). The \(blue\) leaves
		\(B\) of \(T\) are those which may be inner vertices
		for some other tree which extends \(T\).
	\item A \textit{leaf-labelled tree} is \(3\)-tuple \((T, R, B)\),
		such that \(T\) is a rooted tree, \(R \cup B =
		leaves(T)\) and \(R \cap B = \emptyset\). 
		A leaf-labelled rooted tree \((T, R, B)\) is called
		\textit{inner-maximal leaf-labelled rooted tree}, if \(T\) is
		inner-maximal. If \((T, R,
		B)\) is a leaf-labelled tree and \(T'\) is a rooted
		tree such that \(T' \succeq T\) and \(R \subseteq
		leaves(T')\), we say that \(T'\) is a
		\textit{(leaf-preserving) extension} of \((T, R, B)\)
		denoted by \(T' \succeq (T, R, B)\). A leaf-labelled
		rooted tree \((T', R', B')\) \(extends\) a
		leaf-labelled rooted tree \((T, R, B)\) , denoted by
		\((T', R', B') \succeq (T, R, B)\), iff \(T' \succeq
		(T, R, B)\) and \(R \subseteq R'\).
\end{itemize}

We have following observations that will be needed to describe our
algorithm.

\begin{lemma}
	\label{lm:obs31}
	Suppose two inner-maximal trees, \(T\) and \(T'\) are such
	that \(root(T) = root(T')\) and \(leaves(T) = leaves(T')\).
	Then \(V(T) = V(T')\).
\end{lemma}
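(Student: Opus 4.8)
The plan is a proof by contradiction that exploits the symmetry of the hypotheses. Write \(r := root(T) = root(T')\) and \(L := leaves(T) = leaves(T')\), and suppose \(V(T) \neq V(T')\). Since the two hypotheses are symmetric in \(T\) and \(T'\), I may assume without loss of generality that \(V(T) \setminus V(T') \neq \emptyset\). Among all vertices of \(V(T) \setminus V(T')\), I would pick one, say \(w\), that is closest to \(r\) in the tree \(T\) (minimising the number of edges on the \(T\)-path from \(r\) to \(w\)). The aim is to contradict the inner-maximality of \(T'\).

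First I locate the parent of \(w\). As \(r \in V(T) \cap V(T')\), we have \(w \neq r\), so \(w\) has a parent \(p\) in \(T\); note that \(pw \in E(T) \subseteq E(G)\) and that \(p\) is strictly closer to \(r\) in \(T\) than \(w\) is. Since \(p \in V(T)\) while \(w\) was chosen closest to \(r\) among \(V(T)\setminus V(T')\), minimality forces \(p \notin V(T)\setminus V(T')\), whence \(p \in V(T')\). Next, because \(w\) is a child of \(p\) in \(T\), the vertex \(p\) is an inner vertex of \(T\), so \(p \notin leaves(T) = L = leaves(T')\). As \(p \in V(T')\) is not a leaf of \(T'\), it must be an inner vertex of \(T'\). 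Now inner-maximality of \(T'\) gives \(N_{G}(p) \subseteq V(T')\); but \(w \in N_{G}(p)\), so \(w \in V(T')\), contradicting \(w \in V(T)\setminus V(T')\). This contradiction yields \(V(T) = V(T')\).

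I expect the crux to be the step identifying \(p\) as an \emph{inner} vertex of \(T'\) rather than a leaf, since this is exactly where the equality \(leaves(T) = leaves(T')\) is used, together with the fact that the inner vertices and the leaves partition the vertex set of each leaf-labelled rooted tree. The choice of a symmetric-difference vertex closest to the root is what makes the parent \(p\) land in \(V(T')\), so that inner-maximality of \(T'\) becomes applicable to the edge \(pw\); lining up these two ingredients is the only real subtlety, the remaining steps being routine.
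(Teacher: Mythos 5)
Your proof is correct and takes essentially the same route as the paper: the paper walks along the root-to-\(u\) path in \(T\) and propagates membership in \(V(T')\) via inner-maximality of \(T'\), while you pick a vertex of \(V(T)\setminus V(T')\) closest to the root and apply inner-maximality of \(T'\) to its parent --- the identical key step, merely packaged as a minimal counterexample instead of an induction along the path. If anything, your write-up is more explicit about the one subtle point, namely that \(leaves(T) = leaves(T')\) is what lets you upgrade ``inner vertex of \(T\)'' to ``inner vertex of \(T'\)'' before invoking inner-maximality, which the paper's proof passes over silently.
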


\begin{proof}
	Let \(u \in V(T)\). If \(u = root(T)\), then \(u \in V(T')\).
	If \(u \in leaves(T)\), then \(u \in leaves(T')\). Suppose,
	\(u\) is neither a root nor a leaf. Then there exists a path,
	\(v_1, \ldots , v_s\), in \(T\), such that \(v_1 = root(T)\)
	and \(v_s = u\). As \(v_1 = root(T')\), then there exists an
	\(i\) such that \(v_i \in V(T')\). As \(T'\) is inner maximal,
	so \(v_{i + 1} \in V(T')\). Thus, \(u \in V(T')\). So \(V(T)
	\subseteq V(T')\). Similarly, \(V(T') \subseteq V(T)\) and
	hence \(V(T) = V(T')\).
\end{proof}

\begin{remark}
	Suppose \((G, k)\) is a {\sc Yes}-instance. Then, \(G\) has an
	inner maximal subtree with at least \(k\) leaves.
\end{remark}

For each \(v \in V(G)\), we define \(T_v\) be the tree rooted at \(v\),
such that \(V(T_v) = N[v]\) and \(E(T_v) = \{vw | w \in N(w)\}\).
Notice that in order to describe \(T_v\), we need not store a copy of
the entire tree but suffice with just knowing the root.

\textbf{Description of the main algorithm:} We first check if there
exists a vertex of degree at least \(k\). If there is one such vertex,
we return {\sc Yes}. Otherwise, we are in the case where each vertex in
\(G\) has degree at most \(k\) and for every \(v \in V(G)\),
we call \cref{alg:maxleaf}, with the input \((v, \emptyset, N(v))\).
This will check if there exists a \(k\)-leaf rooted subtree in \(G\) with
\(v\) as its root.

\begin{lemma}
	\label{lm:main31}
	If \cref{alg:maxleaf} is correct, then the above main
	algorithm is correct as well.
\end{lemma}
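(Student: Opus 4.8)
The plan is to establish both directions of correctness under the single assumption that \cref{alg:maxleaf} called on $(v,\emptyset,N(v))$ correctly decides whether $G$ contains a $k$-leaf rooted subtree with root $v$ (equivalently, an inner-maximal $k$-leaf extension of the star $T_v$). Since the main algorithm has two exits---the degree test and the loop over all $v$---I would organise the argument around these two exits and prove soundness (a {\sc Yes} output forces a {\sc Yes}-instance) and completeness (a {\sc Yes}-instance forces a {\sc Yes} output) separately; correctness on {\sc No}-instances is then the contrapositive of soundness.

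For soundness, suppose the main algorithm answers {\sc Yes}. If this happened at the degree test, some vertex $v$ has $\deg(v)\ge k$, and $v$ together with $k$ of its neighbours is a star, hence a subtree with at least $k$ leaves, so $(G,k)$ is a {\sc Yes}-instance. Otherwise some call to \cref{alg:maxleaf} returned {\sc Yes}, and by the assumed correctness of \cref{alg:maxleaf} this produces a $k$-leaf rooted subtree; discarding the root orientation leaves a subtree with at least $k$ leaves. Both cases give a {\sc Yes}-instance, which is exactly soundness.

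For completeness, assume $(G,k)$ is a {\sc Yes}-instance. If some vertex has degree at least $k$ the algorithm stops at the degree test and answers {\sc Yes}, so I may assume every vertex has degree less than $k$. By the remark preceding \cref{lm:main31}, $G$ then has an inner-maximal subtree $T$ with at least $k$ leaves. The crucial step is to root $T$ at an \emph{inner} vertex $r$: rooting at an internal vertex leaves the set of degree-one vertices untouched, so the rooted tree still has at least $k$ leaves, whereas rooting at a leaf would lose one. Because $r$ is an inner vertex of the inner-maximal tree $T$, inner-maximality gives $N_G(r)\subseteq V(T)$, so the initial configuration $(r,\emptyset,N(r))$ is consistent with $T$ and $T$ rooted at $r$ witnesses a $k$-leaf rooted subtree with root $r$. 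Hence the call \cref{alg:maxleaf}$(r,\emptyset,N(r))$ returns {\sc Yes} and the main algorithm answers {\sc Yes}.

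The step I expect to be the main obstacle is this rooting and its interface with \cref{alg:maxleaf}. I must argue that an inner-maximal $k$-leaf tree always has an inner vertex to root at---immediate once the tree genuinely branches, i.e.\ for $k\ge 3$, while the small-$k$ corner cases (such as a single edge) are swept up by the degree test together with the standing assumption $\deg(v)<k$---and that rooting at an inner vertex really yields the object that \cref{alg:maxleaf} searches for. This is precisely where inner-maximality (ensuring $N_G(r)\subseteq V(T)$, so that all of $r$'s neighbours are available as the blue set $N(r)$) and \cref{lm:obs31} are invoked, the latter guaranteeing that the inner-maximal tree reconstructed from its root and leaves is unambiguous. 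Everything else reduces to bookkeeping over the algorithm's two exit conditions.
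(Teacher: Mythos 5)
Your proof follows the same two-exit skeleton as the paper's (the degree test, then the loop over roots, with the assumed correctness of \cref{alg:maxleaf} doing the real work), but you are considerably more careful at the one step the paper glosses over. The paper's proof simply takes a witness subtree $T$ with at least $k$ leaves, calls its root $w$, and asserts that \cref{alg:maxleaf} on $(w,\emptyset,N(w))$ returns {\sc Yes}; it never addresses why an \emph{unrooted} witness can be rooted so that (i) no leaf is lost to the rooting and (ii) the initial configuration $(w,\emptyset,N(w))$---which implicitly declares all of $w$'s neighbours to be part of the tree being grown---is consistent with $T$. Your choice to root at an \emph{inner} vertex $r$ of an inner-maximal witness, and to invoke inner-maximality to obtain $N_{G}(r)\subseteq V(T)$, is exactly what is needed to justify that assertion, so your write-up repairs a gap in the paper's argument rather than merely reproducing it. Your explicit soundness/completeness split is also cleaner than the paper's phrasing, which only treats completeness and the {\sc No}-instance direction.

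One caveat: your claim that the small-$k$ degenerate cases are ``swept up by the degree test'' is not correct as stated. Take $k=2$ and $G$ a perfect matching with at least one edge: this is a {\sc Yes}-instance (an edge is a subtree with two leaves), yet every vertex has degree $1<k$, so the degree test does not fire, and the unique inner-maximal witness (a single edge) has no inner vertex to root at. Whether the main algorithm answers correctly then hinges on whether a degree-one root counts as a leaf of a rooted tree, a convention the paper never fixes; the paper's own proof silently suffers from the same degenerate case. For $k\ge 3$, where every witness tree has a branching vertex, your argument is complete, so the honest fix is either to handle $k\le 2$ separately by a trivial scan for an edge, or to state the claim for $k\ge 3$ only.
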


\begin{proof}
	If there exists a vertex of degree with at least \(k\), say
	\(v\), then \(T_v\) is the required tree and we may return
	{\sc Yes}.
	
	Let's consider the case where the maximum degree in \(G\) is
	at most \(k - 1\). Suppose that there exists a rooted subtree of \(G\) with
	at least \(k\) leaves, say \(T\). 
	Let the root of \(T\) to be
	\(w\). Then \cref{alg:maxleaf} with return {\sc Yes} with the
	input \((w, \emptyset, N(w))\).
	In case, \(G\) is a {\sc No}-instance, then \cref{alg:maxleaf}
	will return {\sc No} for any root. Thus, the main algorithm is
	correct.
\end{proof}

In \cref{alg:extendtree}, given an inner-maximal tree and and one of its leaf as
input, we use \(\mathcal{A}_{con}\) to detect the neighbours of the
leaf not already in the tree.

\begin{algorithm}
\caption{Finding a rooted tree with many leaves}\label{alg:maxleaf}
\begin{algorithmic}[1]
	\STATE \underline{{\sc MaxLeaf(\(root(T), R, B\))}}

	\IF { \(| R | + | B | \geq k\)}
	\STATE return {\sc Yes}
	\ENDIF
	\IF {\(B = \emptyset\)}
	\STATE return {\sc No}
	\ENDIF
	\STATE Choose \(u \in B\) 
	\IF {{\sc MaxLeaf}\((root(T), R \cup \{u\}, B \setminus
	\{u\})\)}\label{alg:maxleaf:branch1}
	\STATE return {\sc Yes}\textit{\slash \slash The branch where
	\(u\) remains a leaf}
	\ENDIF
	\STATE \(B \leftarrow B \setminus \{u\}\)
	\STATE \(N \leftarrow\) {\sc ExtendTree}\((u, root(T), R \cup B \cup
	\{u\})\) \textit{\slash \slash Let \(N\) be set of neighbours of \(u\) outside of
	\(T\)}
	\STATE \(i \leftarrow 0\)
	\STATE \(w \leftarrow \text{The first member of } N\)
	\WHILE {\((| leaves(T) | \neq k)  \lor (i \neq | N |)\)}
	\label{alg:maxleaf:while1}
	\STATE Add \(w \in T\) as a neighbour of \(u\)
	\STATE \(w \leftarrow \text{Next member of } N \text{ in their order}\)
	\STATE \(i \leftarrow i + 1\)
	\ENDWHILE
	\IF {\(| leaves(T) | = k\)}
	\STATE return {\sc Yes}
	\ENDIF
	\WHILE {\(| N | = 1\)} \label{alg:maxleaf:while2}
	\STATE \textit{\slash \slash Follow paths}
	\STATE Let \(u\) be the unique element of \(N\)
	\STATE \(N' \leftarrow\) {\sc ExtendTree}\((u, root(T), R \cup B \cup
	\{u\})\) 
	\STATE \(i \leftarrow 0\)
	\STATE \(w \leftarrow \text{The first member of } N'\)
	\WHILE {\((| leaves(T) | \neq k)  \land (i \neq | N' |)\)}
	\STATE Add \(w \in T\) as a neighbour of \(u\)
	\STATE \(w \leftarrow \text{Next member of } N' \text{ in their order}\)
	\STATE \(i \leftarrow i + 1\)
	\ENDWHILE
	\IF {\(| leaves(T) | = k\)}
	\STATE return {\sc Yes}
	\ENDIF
	\STATE \(N \leftarrow N'\)
	\ENDWHILE
	\IF {\(N = \emptyset\)}
	\STATE return {\sc No}
	\ENDIF
	\STATE return {\sc MaxLeaf}\((root(T), R, B \cup N)\)
\end{algorithmic}
\end{algorithm}

\begin{algorithm}
\caption{Finding neighbours of a leaf to extend}\label{alg:extendtree}
\begin{algorithmic}[1]
	\STATE \underline{{\sc ExtendTree(\(u, root(T), leaves(T)\))}}
	\IF {\(u \notin leaves(T)\)}
	\STATE return ``Wrong Input"
	\ENDIF
	\STATE \(r \leftarrow root(T)\)
	\STATE \(N \leftarrow \emptyset\)
	\STATE Let \(G^*\) be the induced subgraph of \(G\) on \(V(G)
	\setminus leaves(T)\) 
	\STATE (We construct \(G^*\) implicitly)
	\FOR {\(w \in N(u) \setminus leaves(T)\)}
	\STATE Use \(\mathcal{A}_{con}\) to check for connectivity of
	\(w\) and \(r\) in \(G^*\)
	\IF {\(w\) and \(r\) are not connected in \(G^*\)}
	\STATE \textit{\slash \slash By inner-maximality of \(T\)}
	\STATE \(N \leftarrow N \cup \{w\}\)
	\ENDIF
	\ENDFOR
	\STATE return \(N\)
\end{algorithmic}
\end{algorithm}

\begin{lemma}
	\label{lm:obs32}
	\cref{alg:extendtree} is correct, if given an inner-maximal
	tree \(T\) as input.
\end{lemma}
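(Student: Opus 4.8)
The plan is to show that on a valid input --- an inner-maximal rooted tree \(T\) together with one of its leaves \(u\) (if \(u \notin leaves(T)\) the algorithm reports \emph{Wrong Input} and there is nothing to prove) --- \cref{alg:extendtree} returns exactly the set \(N_G(u) \setminus V(T)\) of neighbours of \(u\) lying outside \(T\), which is precisely the set promised by the comment in \cref{alg:maxleaf}. The difficulty, and the reason the algorithm is phrased through a connectivity query to \(\mathcal{A}_{con}\) rather than a direct membership test, is that we are given only \(root(T)\) and \(leaves(T)\) explicitly; the (possibly many) inner vertices of \(T\) are never stored, so we cannot simply test whether a candidate neighbour \(w\) belongs to \(V(T)\). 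The whole argument therefore reduces to justifying the single reduction the algorithm performs: for \(w \in N_G(u) \setminus leaves(T)\), membership \(w \in V(T)\) is equivalent to \(w\) being connected to \(root(T)\) in \(G^* = G[V(G) \setminus leaves(T)]\).

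First I would record the output: the algorithm returns \(N = \{w \in N_G(u) \setminus leaves(T) : w \text{ is not connected to } root(T) \text{ in } G^*\}\). Since \(leaves(T) \subseteq V(T)\), establishing the equivalence above immediately yields \(N = N_G(u) \setminus V(T)\), as desired. I would also note that in every call made from \cref{alg:maxleaf} the root is an inner vertex, so \(root(T)\) itself lies in \(G^*\) and the connectivity query is well-posed.

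For the forward direction of the equivalence I would argue that the inner vertices \(V(T) \setminus leaves(T)\) induce a connected subtree of \(G^*\) containing \(root(T)\): deleting the leaves of a tree leaves the remaining vertices connected, because the unique \(T\)-path between any two non-leaf vertices uses only vertices of degree at least two, hence no leaves, and this pruned subtree is a subgraph of \(G^*\). Consequently every \(w \in V(T) \setminus leaves(T)\) --- in particular every neighbour of \(u\) that lies in \(V(T)\) but is not a leaf --- is connected to \(root(T)\) in \(G^*\), so such \(w\) is correctly excluded from \(N\).

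The converse is the crux and the only place inner-maximality is used. I would argue by contradiction: suppose some \(w \in N_G(u) \setminus leaves(T)\) with \(w \notin V(T)\) were connected to \(root(T)\) in \(G^*\) via a path \(P\). Because \(root(T) \in V(T)\) while \(w \notin V(T)\), the path \(P\) must cross the boundary of \(V(T)\); let \(x_j\) be the first vertex of \(P\) (starting from \(w\)) that lies in \(V(T)\), so its predecessor \(x_{j-1}\) lies outside \(V(T)\), and \(j \geq 1\) exists since \(w \notin V(T)\). As \(P\) avoids \(leaves(T)\), the vertex \(x_j\) is a non-leaf vertex of \(T\), i.e. an inner vertex; inner-maximality then forces \(N_G(x_j) \subseteq V(T)\), contradicting \(x_{j-1} \in N_G(x_j) \setminus V(T)\). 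Hence no such path exists and \(w\) is correctly placed in \(N\). Combining the two directions proves that \cref{alg:extendtree} outputs \(N_G(u) \setminus V(T)\), establishing correctness. The main obstacle is exactly this converse, where the global structural hypothesis of inner-maximality is what lets a purely local connectivity test certify non-membership in \(V(T)\) without ever materialising the inner vertices of \(T\).
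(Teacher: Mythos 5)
Your proof is correct, and it is more self-contained than the paper's. The paper's proof of this lemma is terse: it invokes \cref{lm:obs31} (uniqueness of an inner-maximal tree's vertex set once its root and leaves are fixed) and then simply asserts the connectivity criterion, with an imprecise phrasing --- it says ``\(w \in N(u)\setminus leaves(T)\) is connected to \(u\) if and only if it is in \(V(T)\)'', although \(u \in leaves(T)\) is not even a vertex of \(G^*\); what the algorithm actually tests, and what you correctly analyse, is connectivity of \(w\) to \(root(T)\) in \(G^* = G[V(G)\setminus leaves(T)]\). You instead prove the criterion directly, in both directions: the forward direction (the inner vertices of \(T\) form a connected subtree of \(G^*\) containing the root, so members of \(V(T)\) are correctly excluded) and the converse (a path in \(G^*\) from a vertex outside \(V(T)\) to the root would have to enter \(V(T)\) at a non-leaf vertex, whose entire neighbourhood inner-maximality confines to \(V(T)\), a contradiction). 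This boundary-crossing argument is exactly the mechanism that also drives the paper's proof of \cref{lm:obs31}, so in effect you have inlined and completed the argument the paper only gestures at. What your version buys is a precise specification of the output (\(N = N_G(u)\setminus V(T)\)), an explicit identification of where inner-maximality is indispensable, and attention to well-posedness (the root is never a labelled leaf in calls from \cref{alg:maxleaf}, so it survives in \(G^*\)) --- none of which is spelled out in the paper's proof.
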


\begin{proof}
	By \cref{lm:obs31}, by fixing a root and set of leaves, we
	can only have a unique set of vertices of the tree.
	Thus, we can detect the neighbours of the input leaf \(u\)
	which are not already in
	the inner-maximal tree just by using connectivity; as any
	vertex \(w \in N(u) \setminus leaves(T)\) is connected to
	\(u\) if and only if it is in \(V(T)\). Also note that there
	can be at most \(k - 1\) many choices for \(w\), thus allowing
	to detect the required neighbours within the restricted space.
\end{proof}

\begin{lemma}
	\label{lm:obs33}
	In \cref{alg:maxleaf}, if given an inner-maximal tree \(T\), as
	input, it will pass inner-maximal trees as input to recursive
	calls.
\end{lemma}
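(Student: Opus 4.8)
The plan is to check inner-maximality separately for each of the (at most) two recursive calls that \cref{alg:maxleaf} makes to \textsc{MaxLeaf}: the call \textsc{MaxLeaf}$(root(T), R\cup\{u\}, B\setminus\{u\})$ on line~\ref{alg:maxleaf:branch1}, and the final call \textsc{MaxLeaf}$(root(T), R, B\cup N)$. The guiding invariant I would maintain is that the underlying tree is inner-maximal at every moment where \cref{alg:extendtree} is invoked, so that \cref{lm:obs32} applies and the sets it returns are exactly the relevant outside-neighbourhoods.

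The branch on line~\ref{alg:maxleaf:branch1} is immediate. There the tree $T$ is not altered at all; only its leaf labelling changes, as $u$ is moved from the blue set $B$ to the red set $R$. Since $V(T)$, and in particular the set of inner vertices of $T$, is unchanged, the defining condition $N_G(v)\subseteq V(T)$ for every inner vertex $v$ still holds by the hypothesis that the input tree was inner-maximal. Hence the tree passed in this branch is inner-maximal.

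For the final call I would show that whenever it is actually reached, every vertex the algorithm has newly promoted to an inner vertex has had all of its neighbours attached to $T$. Consider first $u$: once it is removed from $B$, the loop on line~\ref{alg:maxleaf:while1} attaches the members of $N = \textsc{ExtendTree}(u,\dots) = N_G(u)\setminus V(T)$ as children of $u$. If this loop stops because $|leaves(T)| = k$, the algorithm returns \textsc{Yes} and makes no recursive call; so in the only situation relevant to the lemma the loop exhausts $N$, giving $N_G(u)\subseteq V(T)$. As the pre-existing inner vertices are untouched, the tree is again inner-maximal right after line~\ref{alg:maxleaf:while1}, which in turn legitimises the later \cref{alg:extendtree} calls via \cref{lm:obs32}.

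Finally I would treat the path-following loop on line~\ref{alg:maxleaf:while2} by induction on its iterations. Each iteration turns the unique remaining new leaf into an inner vertex and attaches all of its outside-neighbours $N'$; by the same argument used for the original $u$---and using that the tree is inner-maximal at the start of the iteration, so the \textsc{ExtendTree} call is correct---the tree remains inner-maximal, unless $|leaves(T)| = k$ is reached, in which case the algorithm returns \textsc{Yes} and does not recurse. When the loop exits and the call \textsc{MaxLeaf}$(root(T), R, B\cup N)$ is reached, the newly created inner vertices all have their neighbourhoods inside $V(T)$ while the old inner vertices are unchanged, so the tree is inner-maximal; its leaf set is $R\cup B\cup N$, matching the labels $R$ and $B\cup N$ of the recursive call. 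I expect the main obstacle to be precisely this bookkeeping for the path-following loop: one must verify that a recursive call is reached only after the current set of outside-neighbours has been fully attached---so that no newly inner vertex is left with a neighbour outside the tree---and that inner-maximality is re-established before each fresh \cref{alg:extendtree} call, so that \cref{lm:obs32} may legitimately be applied at every step.
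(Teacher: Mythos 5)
Your proof is correct and takes essentially the same approach as the paper's: the labelling-only branch leaves the tree untouched, while the other branch uses \cref{lm:obs32} to attach \emph{all} outside neighbours of each newly promoted inner vertex before any recursive call is made. In fact your write-up is more thorough than the paper's, which argues only the two branches for $u$ and does not explicitly handle the path-following loop at line~\ref{alg:maxleaf:while2} or the early \textsc{Yes}-return cases that you carefully account for.
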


\begin{proof}
	Let the chosen blue vertex \(u\) have some neighbours outside
	of the input tree \(T\). There are two branches. In the
	branch, where \(u\) remains a leaf (and hence a red leaf), no
	change is done to structure of \(T\). Thus, the input to this
	branch is an inner-maximal tree. In the branch where \(u\)
	becomes an inner vertex, by \cref{lm:obs32} we can get all
	its neighbours outside the input tree \(T\). They are then
	added to \(T\) as neighbours of \(u\) and as blue leaves. This
	tree, which is an inner-maximal tree, is then passed as input to the recursive call.
\end{proof}

The while-loop at line \ref{alg:maxleaf:while1} of \cref{alg:maxleaf}
adds the neighbours of the vertex \(u\) not already in \(T\). If the
number of leaves of \(T\) becomes \(k\), then it stops and by using
the if-condition following it, the algorithm return {\sc Yes}. The
while-loop at line \ref{alg:maxleaf:while2} follows a path which
emanates from \(u\) and adds it to \(T\), until it comes across a
vertex which can have at least two children in \(T\).

Thus, \cref{alg:maxleaf} mimics the algorithm in
\cite{kneis2011new}. We refer to
\cite{kneis2011new}, for detailed proofs on the correctness
of the original algorithm.

\begin{lemma}
	\label{lm:obs341}
	The time taken for \cref{alg:maxleaf} is
	\(n^{\mathcal{O}(1)} \cdot 4^k\).
\end{lemma}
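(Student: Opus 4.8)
The plan is to bound the number of nodes of the recursion tree of \cref{alg:maxleaf} by $O(4^k)$, to argue that the work done at each node is $n^{\mathcal{O}(1)}$, and to multiply the two. First I would introduce the potential $\mu(root(T),R,B) := 2(k-|R|) - |B|$ and track how it changes across the two recursive calls made by a single invocation. Since the algorithm returns without recursing whenever $|R|+|B| \geq k$ or $B = \emptyset$, at every node that actually branches we have $|R|+|B| < k$ with $B \neq \emptyset$, so $|R| < k$ and $|B| \leq k - |R| - 1$; this gives $\mu \geq (k - |R|) + 1 \geq 1 > 0$, so $\mu$ is a genuine non-negative measure at all branching nodes.

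Next I would analyse the two branches. In the branch at line~\ref{alg:maxleaf:branch1}, where the chosen blue leaf $u$ is moved into $R$, the size $|R|$ increases by one and $|B|$ decreases by one, so $\mu$ drops by exactly $2 - 1 = 1$. In the other branch (the call at the final line), $u$ together with all vertices absorbed by the path-following loop of line~\ref{alg:maxleaf:while2} become inner vertices, and the final set $N$ of new neighbours is added as blue leaves, so $|R|$ is unchanged and $|B|$ changes by $|N| - 1$. The crucial point is that this recursive call is only reached with $|N| \geq 2$, because the path-following loop runs precisely while $|N| = 1$ and the case $N = \emptyset$ returns {\sc No}; hence $|B|$ increases by $|N| - 1 \geq 1$ and $\mu$ again drops by at least one. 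The branching vector is therefore $(1,1)$, so the number of leaves of the recursion tree is at most $2^{\mu_0}$, where $\mu_0$ is the measure of the root call $(v, \emptyset, N(v))$. Since $|N(v)| \leq k - 1$ (the high-degree case having already been answered {\sc Yes} by the main algorithm), we get $\mu_0 \leq 2k$, so the recursion tree has $O(2^{2k}) = O(4^k)$ nodes.

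It then remains to bound the work at a single node. Maintaining $u$, $R$ and $B$ costs $\mathcal{O}(k \log n)$, since these sets have size $\mathcal{O}(k)$. Each call to \cref{alg:extendtree} iterates over $N(u)$ and runs $\mathcal{A}_{con}$ on $G^*$ for each candidate, which is $n^{\mathcal{O}(1)}$ time by Reingold's theorem; and the while-loops of \cref{alg:maxleaf} each execute at most $n$ times (a simple path has at most $n$ vertices), with polynomial work per iteration. Thus every node does $n^{\mathcal{O}(1)}$ work, and the total running time is $n^{\mathcal{O}(1)} \cdot 4^k$.

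The step I expect to be the main obstacle is justifying the measure drop in the second branch, namely that whenever the final recursive call is actually executed the number of freshly added blue leaves is at least two. This is exactly what the path-following loop guarantees: it keeps contracting degree-one extensions without branching, so these steps cost only polynomial time inside one node and do not enlarge the recursion tree, while the {\sc No}-return on $N = \emptyset$ rules out the degenerate case. I would also double-check that the in-line tree growth never causes the leaf count to overshoot in a way that breaks the invariant $|R|+|B| < k$ at branching nodes, relying on \cref{lm:obs33} to ensure that the trees passed to recursive calls are inner-maximal and hence that the bookkeeping underlying $\mu$ stays consistent.
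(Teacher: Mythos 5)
Your proof is correct, but it takes a different route from the paper. The paper's proof of this lemma is essentially citation-based: it observes that each invocation of \cref{alg:maxleaf} (including the calls to \cref{alg:extendtree} and to $\mathcal{A}_{con}$) does only polynomial work, and then invokes Lemma~7 of \cite{kneis2011new} as a black box for the bound of $\mathcal{O}(4^k)$ on the number of recursive calls, arguing that the adapted algorithm mimics the original one. You instead re-derive that recursion bound from first principles, via the measure $\mu = 2(k-|R|)-|B|$ and a branching-vector analysis: the branch at line~\ref{alg:maxleaf:branch1} drops $\mu$ by exactly $1$, and the final recursive call is reached only after the path-following loop at line~\ref{alg:maxleaf:while2} has terminated with $|N|\geq 2$ (the case $N=\emptyset$ having returned \textsc{No}), so it drops $\mu$ by $|N|-1\geq 1$; since $\mu_0 \leq 2k$ and the recursion is binary, the tree has $\mathcal{O}(2^{2k})=\mathcal{O}(4^k)$ nodes. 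This is in effect a reconstruction of the proof of the cited Lemma~7, adapted to the paper's pseudocode, and it buys something real: it verifies that the bound survives the paper's modifications (the in-node tree growth, the early \textsc{Yes} returns when the leaf count reaches $k$, and the implicit representation of $T$), which the paper only asserts via the ``mimics'' claim, whereas the paper's proof is shorter and defers the combinatorics to the original reference. Your per-node accounting (polynomially many iterations of the loops, each calling \cref{alg:extendtree} and $\mathcal{A}_{con}$ in polynomial time) matches the paper's, and your appeal to \cref{lm:obs33} for consistency of the bookkeeping is sound, so I see no gap.
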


\begin{proof}
	In \cref{alg:extendtree}. polynomial time is used. Apart from
	that, the rest of \cref{alg:maxleaf} runs in polynomial time
	except for the time in the recursive calls. By Lemma \(7\) in
	\cite{kneis2011new}, the number of recursive calls is
	\(\mathcal{O}(4^k)\). Thus, the total time taken in a run of
	\cref{alg:maxleaf} is \(n^{\mathcal{O}(1)} \cdot 4^k\).
\end{proof}

As an immediate corollary to \cref{lm:obs341}, we have.
\begin{corollary}
	\label{lm:obs35}
	The main algorithm takes \(n^{\mathcal{O}(1)} \cdot 4^k\)
	time.
\end{corollary}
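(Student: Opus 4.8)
The plan is to combine the per-call bound from \cref{lm:obs341} with a count of how many times \cref{alg:maxleaf} is invoked by the main algorithm. First I would account for the preprocessing step: the main algorithm begins by scanning the graph to decide whether some vertex has degree at least $k$. Since this amounts to inspecting the neighbourhood of each of the $n$ vertices, it can be carried out in polynomial time and contributes no exponential factor. If such a vertex is found the algorithm halts immediately, so the only case that needs analysis is when the maximum degree is at most $k-1$.

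In that case the main algorithm executes its outer loop over all $v \in V(G)$, invoking \cref{alg:maxleaf} exactly once on the input $(v, \emptyset, N(v))$ for each vertex. Hence there are at most $n$ such top-level calls. By \cref{lm:obs341}, each individual call runs in time $n^{\mathcal{O}(1)} \cdot 4^k$. Multiplying the number of calls by the per-call cost yields $n \cdot n^{\mathcal{O}(1)} \cdot 4^k$, and adding the polynomial cost of the degree check leaves the total unchanged up to a polynomial factor. The one point to verify is that the outer loop contributes merely a factor of $n$, which is absorbed into $n^{\mathcal{O}(1)}$, so the overall running time remains $n^{\mathcal{O}(1)} \cdot 4^k$.

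Since the main algorithm introduces no branching or recursion of its own beyond this single loop over the vertices, I do not expect any genuine obstacle here; indeed this is why the statement is phrased as an immediate corollary. The bound follows directly once the number of top-level invocations of \cref{alg:maxleaf} is pinned down to $\mathcal{O}(n)$ and the preprocessing is observed to be polynomial.
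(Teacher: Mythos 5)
Your proposal is correct and follows essentially the same route as the paper's own proof: the paper likewise bounds the total cost by noting that \cref{alg:maxleaf} is invoked once per vertex $v \in V(G)$ and multiplies the $n$ top-level calls by the per-call bound of \cref{lm:obs341}, absorbing the extra factor of $n$ into $n^{\mathcal{O}(1)}$. Your additional remark that the initial degree check is polynomial is a harmless elaboration the paper leaves implicit.
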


\begin{proof}
	As we call \cref{alg:maxleaf} with the input \((T_v,
	\emptyset, N(v))\), for each \(v \in V(G)\), so we need a
	total of \(n^{\mathcal{O}(1)} \cdot 4^k\) time.
\end{proof}

\begin{lemma}
	\label{lm:obs36}
	The working space used by \cref{alg:maxleaf} is
	\(\mathcal{O}(4^k \cdot k \cdot \log n)\).
\end{lemma}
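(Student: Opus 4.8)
The plan is to bound the total working space as the product of two quantities: the space consumed by a single recursion frame of \cref{alg:maxleaf}, and the maximum depth of the recursion. Since at any point of the computation only the frames on the active root-to-node path of the recursion tree are live on the call stack, this product is an upper bound on the simultaneous space usage.

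First I would argue that each frame uses only \(\mathcal{O}(k \log n)\) space. The crucial point, which I would invoke from \cref{lm:obs31}, is that an inner-maximal tree is completely determined by its root together with its set of leaves; hence a frame never needs to store the whole of \(T\) but only \(root(T)\) and the leaf sets \(R\) and \(B\). Because the algorithm returns {\sc Yes} as soon as \(|R|+|B| \geq k\), we always have \(|R|+|B| = |leaves(T)| \leq k\), so this description occupies \(\mathcal{O}(k \log n)\) space. The remaining local variables are the chosen leaf \(u\), the index \(i\), the current vertex \(w\), and the neighbour sets \(N, N'\); since the main algorithm has already ensured maximum degree at most \(k-1\), each of \(N, N'\) has size at most \(k-1\), so all of these together also fit in \(\mathcal{O}(k \log n)\). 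The subroutine \cref{alg:extendtree} called inside the frame builds \(G^*\) only implicitly, runs \(\mathcal{A}_{con}\) in \(\mathcal{O}(\log n)\) space, and accumulates a set \(N\) of at most \(k-1\) vertices, so it adds no more than \(\mathcal{O}(k \log n)\) on top. Thus a single frame costs \(\mathcal{O}(k \log n)\).

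Next I would bound the recursion depth. Since the total number of recursive calls made by \cref{alg:maxleaf} is \(\mathcal{O}(4^k)\) --- this is Lemma~7 of \cite{kneis2011new}, already used in \cref{lm:obs341} --- the depth of the recursion is in particular \(\mathcal{O}(4^k)\). Multiplying the per-frame space by the depth gives \(\mathcal{O}(4^k) \cdot \mathcal{O}(k \log n) = \mathcal{O}(4^k \cdot k \cdot \log n)\), as claimed.

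The hard part, and the step that genuinely uses the bounded-space machinery, will be justifying that no frame ever materialises the full tree. The path-following inner loop (line~\ref{alg:maxleaf:while2}) can append arbitrarily long chains of degree-two inner vertices to \(T\), and storing these explicitly would destroy the bound. \cref{lm:obs31} is exactly what lets me discard them: each such chain is reconstructible from the root and the (at most \(k\)) leaves, so only the updated leaf set need be maintained. I would also note that the \(4^k\) factor in the space bound comes solely from the crude estimate depth \(\le\) (total number of calls); a sharper analysis of the recursion depth is not required for this statement.
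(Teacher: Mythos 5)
Your proof is correct, and it rests on exactly the two ingredients the paper's proof uses: the \(\mathcal{O}(4^k)\) bound on the number of recursive calls (Lemma~7 of \cite{kneis2011new}) and the \(\mathcal{O}(k \cdot \log n)\) description of each input tree by its root and at most \(k\) labelled leaves, which is legitimate precisely because of \cref{lm:obs31}. Where you diverge is the accounting that turns these two numbers into a space bound. You use standard call-stack reasoning: only the frames on the active root-to-node path are live, so the space is at most (recursion depth) times (frame size), and you then bound the depth, crudely, by the total number of calls. The paper instead asserts that the algorithm keeps the \emph{entire} computation tree in memory --- all \(\mathcal{O}(4^k)\) nodes at once, with the current node marked --- and multiplies the number of nodes by the per-node space. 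Both accountings yield \(\mathcal{O}(4^k \cdot k \cdot \log n)\), but they are genuinely different: yours makes explicit that the \(4^k\) factor is only a loose upper bound on recursion depth and could in principle be improved by a sharper depth analysis, whereas the paper's whole-tree bookkeeping pays the \(4^k\) factor by construction. Your write-up also supplies details the paper leaves implicit, namely that \(|N|\) and \(|N'|\) are at most \(k-1\) because the main algorithm has already disposed of vertices of degree at least \(k\), and that \cref{alg:extendtree} contributes only \(\mathcal{O}(k \cdot \log n)\) (Reingold's \(\mathcal{A}_{con}\) plus a set of at most \(k-1\) vertices). Either accounting proves the lemma as stated.
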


\begin{proof}
	In order for us to track the recursive calls, we need to draw
	the entire computation tree and mark the current node. By
	Lemma 7 of \cite{kneis2011new},  there are
	\(\mathcal{O}(4^k)\) recursive calls. And for each node, we just need to need
	to keep in record the root and labeled leaves of the input
	tree. Also, we return {\sc Yes} as soon as the number of
	leaves become \(k\). Thus, we need to store \(\mathcal{O}(k \cdot \log
	n)\) amount of information per node of the computation tree of
	\cref{alg:maxleaf}. Thus, we have proved the claim of the
	lemma.
\end{proof}

Since we can re-use space for each call to \cref{alg:maxleaf}, the
main algorithm uses \(\mathcal{O}(4^k \cdot k \cdot \log n)\)
working space.

We thus have a proof of \cref{th:maxleafsubtree}.

\maxleafsubtree*

\section{MULTICUT IN TREES}\label{multicuts}

\noindent\fbox{\begin{minipage}{\textwidth}
Input: An undirected tree \(T = (V,E), n = | V |\), a collection \(H\)
of \(m\) pairs of nodes in \(T\) and an integer \(k\).	

Parameter: \(k\)

Question: Does \(T\) have an edge subset of size at most \(k\) whose
removal separates each pair of nodes in \(H\)?
\end{minipage}}

\multicutintree*

We can assume that the elements of \(H\) are in order. Let \(H =
\{(a_i, b_i) | 1 \leq i \leq m\}\).

We will adapt the algorithm used in \cite{guo2005fixed} to our bounded
space setting. 
%The pseudocode and other relevant details are
%\cref{multicut}.

Before running the algorithm, we root \(T\) at an arbitrary
vertex, say \(r\). For two vertices \(u\) and \(v\), their
\textit{least common ancestor} is a vertex \(w\), such that it is an
ancestor to both \(u\) and \(v\) and has the greatest depth in \(T\)
measured by the distance from the root. Let \(E^{\star}\) be the solution we will be constructing and
initialise it to \(\emptyset\). At first we check if deleting the
edges already in \(E^{\star}\), will separate all the pairs in \(H\).
If so, then we return {\sc Yes} or else we identify the pair with the
deepest least common ancestor.
Once we have identified the pair with the
deepest least common ancestor with respect to the root, say \(\{u,
v\}\), we
check if the lowest common
ancestor, say \(w\), is one of the pair \(\{u,v\}\). If it is so, then we
delete the edge incident on \(w\) and lying in the uniquely
determined path between \(u\) and \(v\) and recursively call
the function with the appropriate values. Otherwise, there are
two edges incident on \(w\) from the path between \(u\) and
\(v\). We identify them and branch off to two recursive calls,
with each call representing the deletion of one such edge.

\begin{algorithm}
\caption{Finding Least Common Ancestor}\label{alg:lca}
\begin{algorithmic}[1]
	\STATE \underline{\textbf{FindLCA(\((a, b)\))}}
	\STATE \(lca \leftarrow r\)
	\STATE \textit{\slash \slash In the following lines, use
	\(\mathcal{A}_{con}\) to check for connectivity}
	\WHILE {\(lca\) is connected to both \(a\) and
	\(b\)}\label{alg:lca:whileloop}
	\STATE \(T' \leftarrow T - lca \) \textit{\slash \slash
	implicit deletion}
	\STATE \(j \leftarrow 0 \)
	\FOR {\(x \in N(lca)\)}\label{alg:lca:forloop}
	\IF {\(x\) is connected to both \(a\) and \(b\) in \(T'\)}
	\STATE \textit{\slash \slash Such an \(x\) will be unique as
	\(T\) is a tree}
	\STATE \(lca \leftarrow x\)
	\STATE \(j \leftarrow 1 \)
	\STATE Break the for-loop
	\ENDIF
	\ENDFOR
	\IF {\(j = 0\)} 
	\STATE \textit{\slash \slash The current \(lca\) has no child
	which is conected to both \(a\) and \(b\)}
	\STATE Break the while-loop
	\ENDIF
	\ENDWHILE
	\STATE return \(lca\)
\end{algorithmic}
\end{algorithm}

\begin{lemma}
	\label{lm:lcap}
	\cref{alg:lca} finds the least common ancestor for the input
	pair \((a, b)\).
\end{lemma}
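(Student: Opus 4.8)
The plan is to prove correctness through a loop invariant on the variable $lca$, showing that throughout the execution of the while loop of \cref{alg:lca}, $lca$ is always a common ancestor of $a$ and $b$, and that upon termination it is the \emph{deepest} such ancestor.

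First I would record the structural fact underpinning the algorithm: since $T$ is a tree rooted at $r$, for any vertex $w$ the deletion of $w$ splits $T$ into connected components consisting of (i) the subtree rooted at each child of $w$, and (ii) one further component containing the parent of $w$ together with every vertex lying outside the subtree rooted at $w$. Thus for a neighbour $x \in N(w)$, the test ``$x$ is connected to $y$ in $T' = T - w$'' holds precisely when $x$ and $y$ lie in the same such component; and since distinct children's subtrees are disjoint, at most one neighbour can be connected to both $a$ and $b$, which is exactly why the $x$ located in the inner for-loop is unique.

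Next I would set up the invariant: at the start of every iteration of the while loop, $lca$ is an ancestor of both $a$ and $b$ (equivalently, both $a$ and $b$ lie in the subtree rooted at $lca$). The base case is immediate, since $lca = r$ and the root is an ancestor of every vertex. For the inductive step, assume the invariant holds and consider $T' = T - lca$. Because both $a$ and $b$ lie in the subtree rooted at $lca$, neither is connected in $T'$ to the parent of $lca$; hence the parent is never selected and the walk only ever descends. If some child $x$ has both $a$ and $b$ in its subtree, then $x$ is the unique neighbour connected to both in $T'$, the update $lca \leftarrow x$ keeps $lca$ a strictly deeper common ancestor, and the invariant is preserved. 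Otherwise no neighbour is connected to both, so $j$ remains $0$ and the loop breaks.

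Finally I would argue termination and optimality together. Each iteration that does not break strictly increases the depth of $lca$, which is bounded by the height of $T$, so the loop terminates; moreover, since every update $lca \leftarrow x$ chooses an $x$ connected to both $a$ and $b$, the while-condition is always satisfied on entry, so the loop exits only through the $j = 0$ break. When it breaks, no child of $lca$ contains both $a$ and $b$ in its subtree; combined with the invariant this forces either $lca \in \{a,b\}$ (with $lca$ an ancestor of the other vertex), or $a$ and $b$ lying in the subtrees of two distinct children of $lca$. In both cases $lca$ is by definition the least common ancestor of $a$ and $b$, which is the returned value. I expect the main subtlety to be the case analysis at the break condition: verifying that the parent of $lca$ is never chosen, and that $j = 0$ captures exactly the LCA condition, including the degenerate case where $lca$ equals $a$ or $b$, in which the deleted vertex makes every neighbour fail the connectivity test.
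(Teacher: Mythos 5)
Your proof is correct and takes essentially the same approach as the paper's: walk down from the root, deleting the current vertex and using connectivity tests to move to the unique neighbour whose component contains both $a$ and $b$, stopping when no such neighbour exists. Your version is in fact more complete than the paper's own proof, which asserts without detail the uniqueness of that neighbour, the termination of the loop, and the fact that the vertex at the break is the \emph{deepest} common ancestor (including the degenerate case $lca \in \{a,b\}$), all of which you verify explicitly.
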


\begin{proof}
	By definition the least common ancestor, or \textit{lca} in
	short, must be connected to both \(a\) and \(b\) to begin
	with. Thus, it makes sense to initialse the variable \(lca\)
	to \(r\). If the current \(lca\) is deleted from \(T\),
	then there exists at
	most one vertex in the resultant graph which is
	connected to both \(a\) and \(b\), i.e., a neighbour of
	\(lca\) before deletion. If such a vertex does exist, then we
	assign this vertex to the variable \(lca\). If such a vertex
	doesn't exist, then the current \(lca\) is indeed the least
	common ancestor of \(a\) and \(b\). The for-loop at line
	\ref{alg:lca:forloop} tries to find such a vertex. If such a
	vertex is found, then we break the while-loop at \ref{alg:lca:whileloop}.
\end{proof}

\begin{algorithm}[H]
\caption{Finding the distance from the root \(r\) of tree \(T\)}\label{alg:dist}
\begin{algorithmic}[1]
	\STATE \underline{\textbf{dist(\(x\))}}
	\STATE \textit{\slash \slash In the following lines, use
	\(\mathcal{A}_{con}\) to check for connectivity}
	\IF {\( x = r\)}
	\STATE return \(0\)
	\ENDIF
	\IF {\( x \in N(r)\)}
	\STATE return \(1\)
	\ENDIF
	\STATE \(d \leftarrow 0\)
	\STATE \(y \leftarrow r\)
	\WHILE {\(y \neq x\)}
	\STATE \(T' \leftarrow T - y\) \textit{\slash \slash
	implicit deletion}
	\WHILE {\(z \in N(y)\) is not connected to \(x\) in \(T'\)}
	\STATE Move to next member in \(N(y)\)
	\ENDWHILE
	\STATE \textit{\slash \slash There is a unique \(z \in N(y)\), which is
	connected to \(x\) in \(T'\), as \(T\) is a tree}
	\STATE \(y \leftarrow z\)
	\STATE \(d \leftarrow d + 1\)
	\ENDWHILE
	\STATE return \(d\)
\end{algorithmic}
\end{algorithm}

\begin{lemma}
	\label{lm:distp}
	\cref{alg:dist} finds the distance of the vertex \(x\) from
	the root \(r\) of \(T\).
\end{lemma}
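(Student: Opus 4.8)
The plan is to prove correctness by exhibiting a loop invariant for the outer while-loop of \cref{alg:dist}: at the start of each iteration, the variable \(y\) equals the unique vertex on the path from \(r\) to \(x\) in \(T\) that lies at distance exactly \(d\) from \(r\) (equivalently, \(y\) is the ancestor of \(x\) at depth \(d\)). First I would dispose of the two base cases handled by the leading if-statements: if \(x = r\) then the returned value \(0\) is correct, and if \(x \in N(r)\) then \(x\) is a child of the root, so its depth is \(1\).

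For the main loop, initialization of the invariant is immediate, since \(d = 0\) and \(y = r\), and \(r\) is trivially the depth-\(0\) ancestor of \(x\). The heart of the argument is the maintenance step, which rests on a structural fact about trees: if \(y\) is a proper ancestor of \(x\), then deleting \(y\) from \(T\) splits it into connected components, one containing each neighbour of \(y\) (in a tree the only path joining two neighbours of \(y\) passes through \(y\) itself). Exactly one of these components contains \(x\), namely the one rooted at the child \(z^\star\) of \(y\) lying on the \(r\)--\(x\) path. Hence \(z^\star\) is the \emph{unique} neighbour of \(y\) that remains connected to \(x\) in \(T' = T - y\), which is precisely the vertex the inner while-loop locates via \(\mathcal{A}_{con}\) applied to \(T'\). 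After the updates \(y \leftarrow z^\star\) and \(d \leftarrow d + 1\), the variable \(y\) is the depth-\((d+1)\) ancestor of \(x\), re-establishing the invariant. I would note that this uniqueness (already flagged in the algorithm's comment) is exactly what makes the inner loop well defined: it can neither miss the correct child nor select a wrong one.

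For termination and the final answer, I would observe that each iteration advances \(y\) by one vertex along the finite \(r\)--\(x\) path while incrementing \(d\) in lockstep, so after exactly \(\mathrm{dist}_T(r,x)\) iterations we reach \(y = x\), the while-condition fails, and the returned value is \(d = \mathrm{dist}_T(r,x)\), as required.

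The main obstacle I anticipate is not the counting but cleanly justifying the structural fact that in \(T - y\) precisely one neighbour of \(y\) stays connected to \(x\), and that this neighbour is the next vertex towards \(x\); everything else is bookkeeping along the invariant. Since the fact relies only on the uniqueness of paths in a tree, it should follow quickly, but stating it with enough care—so that the call to \(\mathcal{A}_{con}\) on the implicitly-deleted graph \(T'\) is seen to return the intended child—is where attention is needed.
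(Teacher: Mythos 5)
Your proposal is correct and follows essentially the same argument as the paper's own proof: both identify the next vertex toward \(x\) as the unique neighbour of \(y\) connected to \(x\) in \(T - y\), and count steps along the \(r\)--\(x\) path until \(y = x\). Your version merely formalises this as an explicit loop invariant, which is a welcome sharpening but not a different approach.
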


\begin{proof}
	If \(x = r\) or \(x \in N(r)\), then the algorithm correctly
	provides the answer. Otherwise, it uses the variable \(y\) to
	determine the current vertex from which the distance to \(x\)
	needs to evaluated. Initially, \(y\) is set to \(r\). After
	that we determine the unique vertex lying in the path from
	\(y\) to \(x\). This is exactly the unique neighbour of \(y\), which
	is connected to \(x\) in \(T - y\). We keep a count of such
	vertices until we reach \(x\) and increase the count by one
	for each such vertex. Then we return the value of \(d\) as
	distance of \(x\) from \(r\).
\end{proof}

\begin{algorithm}
\caption{Finding Multicut}\label{alg:multicut}
\begin{algorithmic}[1]
	\STATE \underline{\textbf{Multicut(\((E^{\star})\))}}
	\STATE Let \(T^{\star}\) be the subgraph of \(T\) obtained
	after deleting \(E^{\star}\) from the set of edges in \(T\)
	\STATE \(count \leftarrow 0\)
	\STATE \(d \leftarrow - 1\)
	\STATE \(i \leftarrow 0\)
	\FOR {\(i \in [m]\)} \label{alg:multicut:linefor}
	\STATE Use \(\mathcal{A}_{con}\) to check if \(a_i\) and
	\(b_i\) are connected in \(T^{\star}\)
	\IF {\(a_i\) and \(b_i\) are connected in \(T^{\star}\)}
	\IF {\( | E^{\star} | = k\)}
	\STATE return \(NO\)\label{alg:multicut:no}
	\ENDIF
	\STATE \(w \leftarrow FindLCA(a_i, b_i)\) \textit{\slash
		\slash Find the least common ancestor of \(a_i\) and
	\(b_i\)}
	\IF{\(dist(w) > d\)}
	\STATE \(u \leftarrow a_i\)
	\STATE \(v \leftarrow b_i\)
	\STATE \(d \leftarrow dist(w)\)
	\ENDIF
	\ELSE
	\STATE \(count \leftarrow count + 1\)
	\ENDIF
	\STATE \(i \leftarrow i + 1\)
	\ENDFOR
	\IF {\(count = m\)}
	\STATE \textit{\slash \slash All pairs are separated}
	\STATE return \(YES\)\label{alg:multicut:yes} 
	\ENDIF
	\IF {\((w = u) \lor (w = v)\)}
	\STATE Let \(T'\) be the subgraph obtained after deleting
	\(w\) from \(T\) (implicitly)
	\STATE Let \( x \in \{u, v\} \setminus \{w\}\) \textit{\slash
	\slash The other member of the pair}
	\FOR {\(y \in N(w)\)}
	\STATE Use \(\mathcal{A}_{con}\) to check if \(x\) and \(y\)
	are connected in \(T'\)
	\IF {\(x\) and \(y\) are connected in \(T'\)}
	\STATE \(e \leftarrow yw\)
	\STATE Break the for-loop
	\ENDIF
	\ENDFOR
	\STATE return \(Multicut(E^{\star} \cup \{e\})\)
	\ELSE
	\FOR {\( x \in \{u, v\}\)}
	\FOR {\(y \in N(w)\)}
	\STATE Use \(\mathcal{A}_{con}\) to check if \(x\) and \(y\)
	are connected in \(T'\)
	\IF {\(x\) and \(y\) are connected in \(T'\)}
	\STATE \(e_x \leftarrow yw\)
	\STATE Break the inner for-loop
	\ENDIF
	\ENDFOR
	\ENDFOR
	\ENDIF
	\STATE return \(Multicut(E^{\star} \cup \{e_u\}) \lor Multicut(E^{\star} \cup \{e_v\})\)
\end{algorithmic}
\end{algorithm}

\begin{remark}
	In line \ref{alg:multicut:linefor} of \cref{alg:multicut}, the
	for-loop checks if by deleting \(E^{\star}\) from \(T\), one
	can still have unseparated pairs in \(H\). If not, then it
	returns {\sc Yes} (line \ref{alg:multicut:yes}) or else it moves on the rest of the
	algorithm having detected the pair with the deepest
	least common ancestor.
\end{remark}

\begin{lemma}
	\label{lm:obs43}
	The size of \(E^{\star}\) doesn't exceed \(k\) in a run of
	\cref{alg:multicut}.
\end{lemma}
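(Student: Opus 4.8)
The plan is to prove the bound by induction on the depth of recursion in \cref{alg:multicut}, tracking the value of $|E^\star|$ across invocations. The key structural fact I would isolate first is that $E^\star$ is never modified within a single invocation: the only way its size changes is through the recursive calls at the end, each of which passes $E^\star \cup \{e\}$ for a single edge $e$ (either one call $Multicut(E^\star \cup \{e\})$, or the two calls $Multicut(E^\star \cup \{e_u\})$ and $Multicut(E^\star \cup \{e_v\})$). Hence it suffices to show that whenever the algorithm reaches one of those recursive calls, the current set satisfies $|E^\star| \le k-1$; then each recursive call is made with a set of size at most $k$.

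The crux is the control-flow linkage between the guard at line \ref{alg:multicut:no} and the branching step. First I would observe that the algorithm reaches the branching code only when $count \neq m$, i.e., when at least one pair $(a_i, b_i)$ remains connected in $T^\star$ (otherwise it returns {\sc Yes} at line \ref{alg:multicut:yes}). Next, for that connected pair the for-loop at line \ref{alg:multicut:linefor} enters the block guarded by ``$a_i$ and $b_i$ are connected'', where it immediately tests whether $|E^\star| = k$ and, if so, returns {\sc No} at line \ref{alg:multicut:no}. Therefore, if the run did not already terminate with {\sc No}, then at the moment such a connected pair was processed we must have had $|E^\star| < k$, and since $E^\star$ is not touched inside the invocation this still holds when control reaches the branching step.

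With these two observations the induction is routine. For the base case, the top-level call is $Multicut(\emptyset)$, so $|E^\star| = 0 \le k$. For the inductive step, consider any invocation that issues a recursive call; by the argument above it has $|E^\star| \le k-1$, so the child invocation receives a set $E^\star \cup \{e\}$ with $|E^\star \cup \{e\}| \le k$, which re-establishes the hypothesis. Hence $|E^\star| \le k$ in every invocation throughout the run.

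The step I expect to require the most care is the second one, namely the careful reading of the nested conditionals to certify that no recursive call can be issued while $|E^\star| = k$. In particular one must confirm that the $|E^\star| = k$ test is actually reached for \emph{some} still-connected pair before any edge is appended, rather than being bypassed; this follows because the for-loop scans all of $[m]$ and the branching step is guarded by $count \neq m$, so at least one pair must be connected and thus trigger the test. Everything else, namely the invariance of $E^\star$ within a single invocation and the fact that each recursive call increments its size by exactly one, is immediate from the pseudocode.
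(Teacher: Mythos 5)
Your proposal is correct and is essentially the paper's own argument: the paper's (much terser) proof makes exactly your two observations, namely that each recursive call adds exactly one edge to $E^{\star}$, and that once $|E^{\star}|$ reaches $k$ the invocation returns \textsc{Yes} (line~\ref{alg:multicut:yes}) or \textsc{No} (line~\ref{alg:multicut:no}) rather than recursing. Your explicit induction on recursion depth and the control-flow check that the $|E^{\star}|=k$ guard is necessarily triggered by some still-connected pair before any branching occurs simply fill in the details the paper leaves implicit.
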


\begin{proof}
	\(E^{\star}\) is initially \(\emptyset\). In every recursive call, exactly one edge
	is added and when \(|E^{\star} |\) reaches \(k\), either {\sc
	Yes}(line \ref{alg:multicut:yes}) or {\sc No}(line \ref{alg:multicut:no}) is
	returned.
\end{proof}

With the above remarks, we can conclude that
\cref{alg:multicut} mimics the algorithm in Theorem \(1\) in
\cite{guo2005fixed}. We refer to \cite{guo2005fixed}, for details on
the correctness proofs of the original algorithm.

\begin{lemma}
	\label{lm:tsp41}
	\cref{alg:multicut} takes \(n^{\mathcal{O}(1)} \cdot 2^k\)
	time.	
\end{lemma}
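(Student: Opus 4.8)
The plan is to bound the total running time by the standard branching-tree analysis: count the number of recursive calls to \cref{alg:multicut} and multiply by the polynomial amount of work done in a single call, excluding the recursion.

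First I would bound the depth and branching factor of the recursion tree. By \cref{lm:obs43}, $E^{\star}$ starts empty and grows by exactly one edge with each recursive call, and the algorithm terminates (returning {\sc Yes} or {\sc No}) once $|E^{\star}|$ reaches $k$. Hence the depth of the recursion tree is at most $k$. Inspecting the two terminal cases, when the least common ancestor $w$ equals $u$ or $v$ the algorithm makes a single recursive call, while otherwise it makes exactly two recursive calls, on $E^{\star} \cup \{e_u\}$ and $E^{\star} \cup \{e_v\}$. Thus every node of the recursion tree has at most two children, so the tree is binary of depth at most $k$ and therefore has $O(2^k)$ nodes in total.

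Next I would bound the work performed in a single invocation of \cref{alg:multicut}, ignoring the cost of the recursive calls. The body consists of the for-loop over the $m = O(n^2)$ pairs in $H$, which for each pair runs one connectivity check with $\mathcal{A}_{con}$, one call to FindLCA (\cref{alg:lca}) and one call to dist (\cref{alg:dist}), followed by the final edge-selection loops over $N(w)$. Each call to $\mathcal{A}_{con}$ runs in polynomial time by the guarantee on $\mathcal{A}_{con}$. For the subroutines, FindLCA walks from the root toward the least common ancestor, its outer loop iterating at most $n$ times and each iteration scanning the neighbourhood $N(lca)$ with polynomially many connectivity checks; dist is analysed identically. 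Every connectivity check here is performed on a graph obtained by implicitly deleting $E^{\star}$, whose size is at most $k$ by \cref{lm:obs43}, together with a single vertex, and by the deletion-simulation discussion in the preliminaries this oracle access costs only $n^{O(1)}$ time. Consequently a single invocation of \cref{alg:multicut} runs in $n^{O(1)}$ time.

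Multiplying the $O(2^k)$ bound on the number of nodes by the $n^{O(1)}$ per-node cost yields the claimed running time of $n^{O(1)} \cdot 2^k$. The main subtlety to handle carefully is verifying that the auxiliary routines FindLCA and dist, which themselves contain nested loops over neighbourhoods with repeated connectivity queries on implicitly modified graphs, each run in polynomial time; once this is established, the overall bound follows from the elementary $O(2^k)$ count on a binary recursion tree of depth $k$.
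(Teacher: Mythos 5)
Your proof is correct and follows the same overall decomposition as the paper: bound the number of recursive calls by $2^k$ and multiply by the polynomial work per call. The one substantive difference is how the $2^k$ bound is justified. The paper simply cites Theorem~1 of \cite{guo2005fixed} for the bound on the number of recursive calls, whereas you derive it directly from the structure of \cref{alg:multicut}: by \cref{lm:obs43} each recursive call adds exactly one edge to $E^{\star}$ and the algorithm halts once $|E^{\star}|=k$, so the recursion tree is binary of depth at most $k$ and has $O(2^k)$ nodes. This makes your argument self-contained where the paper leans on the original analysis of Guo and Niedermeier. You also spell out why \cref{alg:lca} and \cref{alg:dist} run in polynomial time (outer loops of at most $n$ iterations, each performing polynomially many connectivity queries on implicitly modified graphs, with the implicit deletion of at most $k$ edges handled by the oracle technique from the preliminaries), a point the paper asserts without elaboration. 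Both routes yield the claimed $n^{\mathcal{O}(1)} \cdot 2^k$ bound; yours trades the external citation for a short explicit branching analysis, which is arguably preferable in a paper whose stated goal is to adapt these algorithms to the bounded-space setting.
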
 

\begin{proof}
	In a call to \cref{alg:multicut}, there are at most \(m\)
	calls to \cref{alg:lca} and \cref{alg:dist}, each of which
	take \(n^{\mathcal{O}(1)}\) time. Also, by Theorem 1 of
	\cite{guo2005fixed}, there are atmost \(2^k\) recursive calls.
	Thus, the total time taken is \(n^{\mathcal{O}(1)} \cdot
	2^k\).
\end{proof}

\begin{remark}
	\label{lm:tsp421}
	\cref{alg:lca} and \cref{alg:dist} take
	\(\mathcal{O}(\log n)\) working space.
\end{remark}
%\todo[inline]{Add a proof}

\begin{lemma}
	\label{lm:tsp43}
	\cref{alg:multicut} takes \(\mathcal{O}(2^k \cdot k \cdot
	\log n)\) working space.
\end{lemma}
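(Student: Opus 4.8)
The plan is to mirror the space analysis of the MaxLeaf algorithm (\cref{lm:obs36}): since \cref{alg:multicut} is a branching recursion and we work in a bounded-space model where the recursion cannot simply be delegated to an implicit hardware stack, I would account for the space by materialising the entire computation tree of \cref{alg:multicut} and marking the currently active node. The first step is therefore to bound the number of nodes of this tree. Each recursive call adds exactly one edge to \(E^{\star}\) and, by \cref{lm:obs43}, \(|E^{\star}|\) never exceeds \(k\); together with the branching factor of at most two, Theorem~1 of \cite{guo2005fixed} (the same bound I already invoked for the running time in \cref{lm:tsp41}) gives that the tree has \(\mathcal{O}(2^{k})\) nodes.

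Next I would bound the state that must be stored per node. The only persistent datum passed between calls is the edge set \(E^{\star}\), and since \(|E^{\star}| \le k\) by \cref{lm:obs43} while each edge is named by a pair of vertices in \(\mathcal{O}(\log n)\) bits, storing \(E^{\star}\) costs \(\mathcal{O}(k \log n)\). The remaining local variables of \cref{alg:multicut} --- the pair \((u,v)\), the ancestor \(w\), the depth \(d\), the counters \(count\) and \(i\), and the candidate edges \(e, e_u, e_v\) --- are each a single vertex, edge, or integer bounded by \(n\), hence \(\mathcal{O}(\log n)\) apiece and \(\mathcal{O}(\log n)\) in total. Thus the information kept at each node is \(\mathcal{O}(k \log n)\), and multiplying by the \(\mathcal{O}(2^{k})\) nodes yields \(\mathcal{O}(2^{k} \cdot k \cdot \log n)\).

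It then remains to argue that the auxiliary work does not inflate this bound. Every connectivity query is answered by \(\mathcal{A}_{con}\) in \(\mathcal{O}(\log n)\) reusable space, and by \cref{lm:tsp421} the subroutines FindLCA and dist also run in \(\mathcal{O}(\log n)\) space; since these are invoked one at a time and their scratch space can be reclaimed, they contribute only an additive \(\mathcal{O}(\log n)\). I would also note that the two implicit deletions used in the algorithm --- forming \(T^{\star} = T - E^{\star}\) and \(T' = T - w\) --- are never carried out by copying a subgraph but are simulated through oracle access, which is legitimate precisely because \(|E^{\star}| \le k\) (and \(\{w\}\) is a single vertex), exactly as set up in the preliminaries on deleting vertices and edges.

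The step I expect to be the main obstacle is the very first one: justifying, within the bounded-space model, that storing the computation tree with only \(\mathcal{O}(k \log n)\) per node is enough and that we never need to hold anything larger. Concretely, I must ensure that from the stored per-node state one can deterministically regenerate each child call --- which edge to delete, and in which of the at most two branches --- without recomputing or caching anything whose size exceeds the budget. This is exactly where the uniqueness of the relevant neighbour \(z \in N(y)\) guaranteed by the tree structure (as used in \cref{alg:lca} and \cref{alg:dist}) and the bound \(|E^{\star}| \le k\) do the real work. Once this bookkeeping is pinned down, the multiplication \(\mathcal{O}(2^{k}) \cdot \mathcal{O}(k \log n)\) closes the proof.
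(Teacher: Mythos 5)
Your proposal is correct and follows essentially the same route as the paper: bound the per-node storage of the computation tree by \(\mathcal{O}(k \cdot \log n)\) (dominated by \(E^{\star}\), with the subroutines of \cref{lm:tsp421} contributing only reusable \(\mathcal{O}(\log n)\) scratch space) and multiply by the \(\mathcal{O}(2^k)\) recursive calls guaranteed by Theorem~1 of \cite{guo2005fixed}. Your accounting is in fact more explicit than the paper's two-sentence proof, which leaves the origin of the \(k \cdot \log n\) per-node cost implicit.
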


\begin{proof}
	By \cref{lm:tsp421}, a single node of the computation tree of \cref{alg:multicut} will
	take \(\mathcal{O}(k \cdot \log n)\) space. And as there are at most
	\(2^k\) recursive calls, the total working space needed is \(\mathcal{O}(2^k \cdot k \cdot
	\log n)\).
\end{proof}

We have thus proved \cref{th:multicutintree}.

\multicutintree*

\section{Conclusion}

We discussed three graph theoretic problems in the space bounded
settings and provided {\sc Fpt}-time algorithms for them, namely
\(k\)-{\sc Path}, {\sc MaxLeaf Subtree} and {\sc Multicut} in Trees.
It would be interesting
to see whether other standard graph theoretic problems can also admit
{\sc Fpt} algorithms when the working space is bounded, like the {\sc
Steiner Tree} problem. Also, improving the running time of the
problems already solved here can be an interesting area of study.

\bibliographystyle{splncs04}
\bibliography{references}

@book{arora2009computational,
  title={Computational complexity: a modern approach},
  author={Arora, Sanjeev and Barak, Boaz},
  year={2009},
  publisher={Cambridge University Press}
}

@article{ElberfeldST15,
  author       = {Michael Elberfeld and
                  Christoph Stockhusen and
                  Till Tantau},
  title        = {On the Space and Circuit Complexity of Parameterized Problems: Classes
                  and Completeness},
  journal      = {Algorithmica},
  volume       = {71},
  number       = {3},
  pages        = {661--701},
  year         = {2015},
  url          = {https://doi.org/10.1007/s00453-014-9944-y},
  doi          = {10.1007/S00453-014-9944-Y},
  timestamp    = {Wed, 17 May 2017 14:25:11 +0200},
  biburl       = {https://dblp.org/rec/journals/algorithmica/ElberfeldST15.bib},
  bibsource    = {dblp computer science bibliography, https://dblp.org}
}

@inproceedings{BodlaenderWG24,
  author       = {Hans L. Bodlaender and
                  Krisztina Szil{\'{a}}gyi},
  editor       = {Daniel Kr{\'{a}}l and
                  Martin Milanic},
  title        = {XNLP-Hardness of Parameterized Problems on Planar Graphs},
  booktitle    = {Graph-Theoretic Concepts in Computer Science - 50th International
                  Workshop, {WG} 2024, Gozd Martuljek, Slovenia, June 19-21, 2024, Revised
                  Selected Papers},
  series       = {Lecture Notes in Computer Science},
  volume       = {14760},
  pages        = {107--120},
  publisher    = {Springer},
  year         = {2024},
  url          = {https://doi.org/10.1007/978-3-031-75409-8\_8},
  doi          = {10.1007/978-3-031-75409-8\_8},
  timestamp    = {Fri, 14 Feb 2025 20:49:32 +0100},
  biburl       = {https://dblp.org/rec/conf/wg/BodlaenderS24.bib},
  bibsource    = {dblp computer science bibliography, https://dblp.org}
}

@inproceedings{BodlaenderIPEC22,
  author       = {Hans L. Bodlaender and
                  Carla Groenland and
                  Hugo Jacob and
                  Lars Jaffke and
                  Paloma T. Lima},
  editor       = {Holger Dell and
                  Jesper Nederlof},
  title        = {XNLP-Completeness for Parameterized Problems on Graphs with a Linear
                  Structure},
  booktitle    = {17th International Symposium on Parameterized and Exact Computation,
                  {IPEC} 2022, September 7-9, 2022, Potsdam, Germany},
  series       = {LIPIcs},
  volume       = {249},
  pages        = {8:1--8:18},
  publisher    = {Schloss Dagstuhl - Leibniz-Zentrum f{\"{u}}r Informatik},
  year         = {2022},
  url          = {https://doi.org/10.4230/LIPIcs.IPEC.2022.8},
  doi          = {10.4230/LIPICS.IPEC.2022.8},
  timestamp    = {Wed, 21 Aug 2024 22:46:00 +0200},
  biburl       = {https://dblp.org/rec/conf/iwpec/BodlaenderGJJL22.bib},
  bibsource    = {dblp computer science bibliography, https://dblp.org}
}

@inproceedings{BodlaenderESA22,
  author       = {Hans L. Bodlaender and
                  Carla Groenland and
                  Hugo Jacob},
  editor       = {Shiri Chechik and
                  Gonzalo Navarro and
                  Eva Rotenberg and
                  Grzegorz Herman},
  title        = {List Colouring Trees in Logarithmic Space},
  booktitle    = {30th Annual European Symposium on Algorithms, {ESA} 2022, September
                  5-9, 2022, Berlin/Potsdam, Germany},
  series       = {LIPIcs},
  volume       = {244},
  pages        = {24:1--24:15},
  publisher    = {Schloss Dagstuhl - Leibniz-Zentrum f{\"{u}}r Informatik},
  year         = {2022},
  url          = {https://doi.org/10.4230/LIPIcs.ESA.2022.24},
  doi          = {10.4230/LIPICS.ESA.2022.24},
  timestamp    = {Wed, 21 Aug 2024 22:46:00 +0200},
  biburl       = {https://dblp.org/rec/conf/esa/BodlaenderGJ22.bib},
  bibsource    = {dblp computer science bibliography, https://dblp.org}
}

@inproceedings{BodlaenderFOCS21,
  author       = {Hans L. Bodlaender and
                  Carla Groenland and
                  Jesper Nederlof and
                  C{\'{e}}line M. F. Swennenhuis},
  title        = {Parameterized Problems Complete for Nondeterministic {FPT} time and
                  Logarithmic Space},
  booktitle    = {62nd {IEEE} Annual Symposium on Foundations of Computer Science, {FOCS}
                  2021, Denver, CO, USA, February 7-10, 2022},
  pages        = {193--204},
  publisher    = {{IEEE}},
  year         = {2021},
  url          = {https://doi.org/10.1109/FOCS52979.2021.00027},
  doi          = {10.1109/FOCS52979.2021.00027},
  timestamp    = {Thu, 23 Jun 2022 19:58:24 +0200},
  biburl       = {https://dblp.org/rec/conf/focs/BodlaenderGNS21.bib},
  bibsource    = {dblp computer science bibliography, https://dblp.org}
}

@inproceedings{10.1007/978-3-031-22105-7_23,
	author = {Chen, Jianer and Chu, Zirui and Guo, Ying and Yang, Wei},
	title = {Space Limited Graph Algorithms on Big Data},
	year = {2023},
	isbn = {978-3-031-22104-0},
	publisher = {Springer-Verlag},
	address = {Berlin, Heidelberg},
	url = {https://doi.org/10.1007/978-3-031-22105-7_23},
	doi = {10.1007/978-3-031-22105-7_23},
	abstract = {We study algorithms for graph problems in which the graphs are of extremely large size N so that super-linear time ω(N) or linear space Θ(N) would become impractical. We use a parameter k to characterize the computational power of a normal computer that can provide additional time and space bounded by polynomials of k. In particular, we are interested in strict linear-time algorithms using space O(kO(1)). In our case studies, as examples, we present a randomized algorithm of time O(N) and space O(k2) that constructs a maximal matching of size upper bounded by k in a graph of size N, and a randomized kernelization algorithm of time O(N) and space O(k3) for the NP-hard Edge Dominating Set problem. Our kernelization algorithm for Edge Dominating Set has its kernel size match the best kernel size by known polynomial-time kernelization algorithms for the problem with no space complexity constraints. We also show that the techniques developed in our algorithms can be used to develop improved streaming algorithms.},
	booktitle = {Computing and Combinatorics: 28th International Conference, COCOON 2022, Shenzhen, China, October 22–24, 2022, Proceedings},
	pages = {255–267},
	numpages = {13},
	keywords = {Big data, Maximal matching, Edge dominating set},
	location = {Shenzhen, China}
}

@InProceedings{10.1007/978-981-97-2340-9_22,
	author="Kammer, Frank
	and Sajenko, Andrej",
	editor="Chen, Xujin
	and Li, Bo",
	title="Space-Efficient Graph Kernelizations",
	booktitle="Theory and Applications of Models of Computation",
	year="2024",
	publisher="Springer Nature Singapore",
	address="Singapore",
	pages="260--271",
	abstract="Let n be the size of a parameterized problem and k
	the parameter. We present kernels for Feedback Vertex Set and
	Path Contraction whose sizes are all polynomial in k and that
	are computable in polynomial time and with
	{\$}{\$}O({\{}{\{}{\backslash},{\backslash}textrm{\{}poly{\}}{\backslash},{\}}{\}}(k)
	{\backslash}log n){\$}{\$}O(poly(k)logn)bits (of working
	memory). By using kernel cascades, we obtain the best known
	kernels in polynomial time with
	{\$}{\$}O({\{}{\{}{\backslash},{\backslash}textrm{\{}poly{\}}{\backslash},{\}}{\}}(k)
	{\backslash}log n){\$}{\$}O(poly(k)logn)bits.",
	isbn="978-981-97-2340-9"
}

@article{kneis2011new,
	title={A new algorithm for finding trees with many leaves},
	author={Kneis, Joachim and Langer, Alexander and Rossmanith, Peter},
	journal={Algorithmica},
	volume={61},
	pages={882--897},
	year={2011},
	publisher={Springer}
}

@article{guo2005fixed,
	title={Fixed-parameter tractability and data reduction for multicut in trees},
	author={Guo, Jiong and Niedermeier, Rolf},
	journal={Networks: An International Journal},
	volume={46},
	number={3},
	pages={124--135},
	year={2005},
	publisher={Wiley Online Library}
}

@InProceedings{10.1007/978-3-642-11269-0_1,
author="Alon, Noga
and Gutner, Shai",
editor="Chen, Jianer
and Fomin, Fedor V.",
title="Balanced Hashing, Color Coding and Approximate Counting",
booktitle="Parameterized and Exact Computation",
year="2009",
publisher="Springer Berlin Heidelberg",
address="Berlin, Heidelberg",
pages="1--16",
abstract="Color Coding is an algorithmic technique for deciding efficiently if a given input graph contains a path of a given length (or another small subgraph of constant tree-width). Applications of the method in computational biology motivate the study of similar algorithms for counting the number of copies of a given subgraph. While it is unlikely that exact counting of this type can be performed efficiently, as the problem is {\#}W[1]-complete even for paths, approximate counting is possible, and leads to the investigation of an intriguing variant of families of perfect hash functions. A family of functions from [n] to [k] is an ($\epsilon$,k)-balanced family of hash functions, if there exists a positive T so that for every K{\thinspace}⊂{\thinspace}[n] of size |K|{\thinspace}={\thinspace}k, the number of functions in the family that are one-to-one on K is between (1{\thinspace}−{\thinspace}$\epsilon$)T and (1{\thinspace}+{\thinspace}$\epsilon$)T. The family is perfectly k-balanced if it is (0,k)-balanced.",
isbn="978-3-642-11269-0"
}

@article{b68c33ca-3366-3b13-8901-69e76cc88da6,
 ISSN = {0003486X},
 URL = {http://www.jstor.org/stable/3597229},
 abstract = {We present an unconditional deterministic polynomial-time algorithm that determines whether an input number is prime or composite.},
 author = {Manindra Agrawal and Neeraj Kayal and Nitin Saxena},
 journal = {Annals of Mathematics},
 number = {2},
 pages = {781--793},
 publisher = {Annals of Mathematics},
 title = {PRIMES Is in P},
 urldate = {2024-09-24},
 volume = {160},
 year = {2004}
}

@article{CARTER1979143,
title = {Universal classes of hash functions},
journal = {Journal of Computer and System Sciences},
volume = {18},
number = {2},
pages = {143-154},
year = {1979},
issn = {0022-0000},
doi = {https://doi.org/10.1016/0022-0000(79)90044-8},
url = {https://www.sciencedirect.com/science/article/pii/0022000079900448},
author = {J.Lawrence Carter and Mark N. Wegman},
abstract = {This paper gives an input independent average linear time algorithm for storage and retrieval on keys. The algorithm makes a random choice of hash function from a suitable class of hash functions. Given any sequence of inputs the expected time (averaging over all functions in the class) to store and retrieve elements is linear in the length of the sequence. The number of references to the data base required by the algorithm for any input is extremely close to the theoretical minimum for any possible hash function with randomly distributed inputs. We present three suitable classes of hash functions which also can be evaluated rapidly. The ability to analyze the cost of storage and retrieval without worrying about the distribution of the input allows as corollaries improvements on the bounds of several algorithms.}
}

@book{10.5555/1614191,
author = {Cormen, Thomas H. and Leiserson, Charles E. and Rivest, Ronald L. and Stein, Clifford},
title = {Introduction to Algorithms, Third Edition},
year = {2009},
isbn = {0262033844},
publisher = {The MIT Press},
edition = {3rd},
abstract = {If you had to buy just one text on algorithms, Introduction to Algorithms is a magnificent choice. The book begins by considering the mathematical foundations of the analysis of algorithms and maintains this mathematical rigor throughout the work. The tools developed in these opening sections are then applied to sorting, data structures, graphs, and a variety of selected algorithms including computational geometry, string algorithms, parallel models of computation, fast Fourier transforms (FFTs), and more. This book's strength lies in its encyclopedic range, clear exposition, and powerful analysis. Pseudo-code explanation of the algorithms coupled with proof of their accuracy makes this book is a great resource on the basic tools used to analyze the performance of algorithms.}
}

@inproceedings{10.1145/1060590.1060647,
author = {Reingold, Omer},
title = {Undirected ST-connectivity in log-space},
year = {2005},
isbn = {1581139608},
publisher = {Association for Computing Machinery},
address = {New York, NY, USA},
url = {https://doi.org/10.1145/1060590.1060647},
doi = {10.1145/1060590.1060647},
abstract = {We present a deterministic, log-space algorithm that solves st-connectivity in undirected graphs. The previous bound on the space complexity of undirected st-connectivity was log4/3 obtained by Armoni, Ta-Shma, Wigderson and Zhou [9]. As undirected st-connectivity is complete for the class of problems solvable by symmetric, non-deterministic, log-space computations (the class SL), this algorithm implies that SL = L (where L is the class of problems solvable by deterministic log-space computations). Independent of our work (and using different techniques), Trifonov [45] has presented an O(log n log log n)-space, deterministic algorithm for undirected st-connectivity.Our algorithm also implies a way to construct in log-space a fixed sequence of directions that guides a deterministic walk through all of the vertices of any connected graph. Specifically, we give log-space constructible universal-traversal sequences for graphs with restricted labelling and log-space constructible universal-exploration sequences for general graphs.},
booktitle = {Proceedings of the Thirty-Seventh Annual ACM Symposium on Theory of Computing},
pages = {376–385},
numpages = {10},
location = {Baltimore, MD, USA},
series = {STOC '05}
}

@inproceedings{10.1145/237814.237823,
author = {Alon, Noga and Matias, Yossi and Szegedy, Mario},
title = {The space complexity of approximating the frequency moments},
year = {1996},
isbn = {0897917855},
publisher = {Association for Computing Machinery},
address = {New York, NY, USA},
url = {https://doi.org/10.1145/237814.237823},
doi = {10.1145/237814.237823},
booktitle = {Proceedings of the Twenty-Eighth Annual ACM Symposium on Theory of Computing},
pages = {20–29},
numpages = {10},
location = {Philadelphia, Pennsylvania, USA},
series = {STOC '96}
}

@InProceedings{chitnis_et_al:LIPIcs.IPEC.2019.7,
  author =	{Chitnis, Rajesh and Cormode, Graham},
  title =	{{Towards a Theory of Parameterized Streaming Algorithms}},
  booktitle =	{14th International Symposium on Parameterized and Exact Computation (IPEC 2019)},
  pages =	{7:1--7:15},
  series =	{Leibniz International Proceedings in Informatics (LIPIcs)},
  ISBN =	{978-3-95977-129-0},
  ISSN =	{1868-8969},
  year =	{2019},
  volume =	{148},
  editor =	{Jansen, Bart M. P. and Telle, Jan Arne},
  publisher =	{Schloss Dagstuhl -- Leibniz-Zentrum f{\"u}r Informatik},
  address =	{Dagstuhl, Germany},
  URL =		{https://drops.dagstuhl.de/entities/document/10.4230/LIPIcs.IPEC.2019.7},
  URN =		{urn:nbn:de:0030-drops-114682},
  doi =		{10.4230/LIPIcs.IPEC.2019.7},
  annote =	{Keywords: Parameterized Algorithms, Streaming Algorithms, Kernels}
}

@InProceedings{ghosh_et_al:LIPIcs.ESA.2024.60,
  author =	{Ghosh, Prantar and Kuchlous, Sahil},
  title =	{{New Algorithms and Lower Bounds for Streaming Tournaments}},
  booktitle =	{32nd Annual European Symposium on Algorithms (ESA 2024)},
  pages =	{60:1--60:19},
  series =	{Leibniz International Proceedings in Informatics (LIPIcs)},
  ISBN =	{978-3-95977-338-6},
  ISSN =	{1868-8969},
  year =	{2024},
  volume =	{308},
  editor =	{Chan, Timothy and Fischer, Johannes and Iacono, John and Herman, Grzegorz},
  publisher =	{Schloss Dagstuhl -- Leibniz-Zentrum f{\"u}r Informatik},
  address =	{Dagstuhl, Germany},
  URL =		{https://drops.dagstuhl.de/entities/document/10.4230/LIPIcs.ESA.2024.60},
  URN =		{urn:nbn:de:0030-drops-211318},
  doi =		{10.4230/LIPIcs.ESA.2024.60},
  annote =	{Keywords: tournaments, streaming algorithms, graph algorithms, communication complexity, strongly connected components, reachability, feedback arc set}
}

@book{DBLP:books/sp/CyganFKLMPPS15,
  author       = {Marek Cygan and
                  Fedor V. Fomin and
                  Lukasz Kowalik and
                  Daniel Lokshtanov and
                  D{\'{a}}niel Marx and
                  Marcin Pilipczuk and
                  Michal Pilipczuk and
                  Saket Saurabh},
  title        = {Parameterized Algorithms},
  publisher    = {Springer},
  year         = {2015},
  url          = {https://doi.org/10.1007/978-3-319-21275-3},
  doi          = {10.1007/978-3-319-21275-3},
  isbn         = {978-3-319-21274-6},
  timestamp    = {Sun, 25 Oct 2020 22:32:21 +0100},
  biburl       = {https://dblp.org/rec/books/sp/CyganFKLMPPS15.bib},
  bibsource    = {dblp computer science bibliography, https://dblp.org}
}

@inproceedings{DBLP:conf/mfcs/FafianieK15,
  author       = {Stefan Fafianie and
                  Stefan Kratsch},
  editor       = {Giuseppe F. Italiano and
                  Giovanni Pighizzini and
                  Donald Sannella},
  title        = {A Shortcut to (Sun)Flowers: Kernels in Logarithmic Space or Linear
                  Time},
  booktitle    = {Mathematical Foundations of Computer Science 2015 - 40th International
                  Symposium, {MFCS} 2015, Milan, Italy, August 24-28, 2015, Proceedings,
                  Part {II}},
  series       = {Lecture Notes in Computer Science},
  volume       = {9235},
  pages        = {299--310},
  publisher    = {Springer},
  year         = {2015},
  url          = {https://doi.org/10.1007/978-3-662-48054-0\_25},
  doi          = {10.1007/978-3-662-48054-0\_25},
  timestamp    = {Tue, 14 May 2019 10:00:37 +0200},
  biburl       = {https://dblp.org/rec/conf/mfcs/FafianieK15.bib},
  bibsource    = {dblp computer science bibliography, https://dblp.org}
}

@article{CHEN2022104951,
title = {Linear-time parameterized algorithms with limited local resources},
journal = {Information and Computation},
volume = {289},
pages = {104951},
year = {2022},
issn = {0890-5401},
doi = {https://doi.org/10.1016/j.ic.2022.104951},
url = {https://www.sciencedirect.com/science/article/pii/S0890540122001067},
author = {Jianer Chen and Ying Guo and Qin Huang},
keywords = {Bigdata, Linear-time algorithm, Space complexity, Graph matching},
abstract = {We propose a new computational model for the study of massive data processing. Our model measures the complexity of reading the input data in terms of their very large size N and analyzes the computational cost in terms of a parameter k that characterizes the computational power provided by limited local computing resources. We develop new algorithmic techniques for solving well-known computational problems on the model. In particular, randomized algorithms of running time O(N+g1(k)) and space O(k2), with very high probability, are developed for the famous graph matching problem on unweighted and weighted graphs. More specifically, our algorithm for unweighted graphs finds a k-matching (i.e., a matching of k edges) in a general unweighted graph in time O(N+k2.5), and our algorithm for weighted graphs finds a maximum weighted k-matching in a general weighted graph in time O(N+k3log⁡k).}
}

@article{FEIGENBAUM2005207,
title = {On graph problems in a semi-streaming model},
journal = {Theoretical Computer Science},
volume = {348},
number = {2},
pages = {207-216},
year = {2005},
note = {Automata, Languages and Programming: Algorithms and Complexity (ICALP-A 2004)},
issn = {0304-3975},
doi = {https://doi.org/10.1016/j.tcs.2005.09.013},
url = {https://www.sciencedirect.com/science/article/pii/S0304397505005323},
author = {Joan Feigenbaum and Sampath Kannan and Andrew McGregor and Siddharth Suri and Jian Zhang},
keywords = {Graph, Streaming, Matching, Spanner, Girth, Articulation point},
abstract = {We formalize a potentially rich new streaming model, the semi-streaming model, that we believe is necessary for the fruitful study of efficient algorithms for solving problems on massive graphs whose edge sets cannot be stored in memory. In this model, the input graph, G=(V,E), is presented as a stream of edges (in adversarial order), and the storage space of an algorithm is bounded by O(n·polylog n), where n=|V|. We are particularly interested in algorithms that use only one pass over the input, but, for problems where this is provably insufficient, we also look at algorithms using constant or, in some cases, logarithmically many passes. In the course of this general study, we give semi-streaming constant approximation algorithms for the unweighted and weighted matching problems, along with a further algorithmic improvement for the bipartite case. We also exhibit logn/loglogn semi-streaming approximations to the diameter and the problem of computing the distance between specified vertices in a weighted graph. These are complemented by Ω(log(1-ε)n) lower bounds.}
}

@inbook{doi:10.1137/1.9781611977912.28,
author = {Daniel Lokshtanov and Pranabendu Misra and Fahad Panolan and M. S. Ramanujan and Saket Saurabh and Meirav Zehavi},
title = {Meta-theorems for Parameterized Streaming Algorithms‡},
booktitle = {Proceedings of the 2024 Annual ACM-SIAM Symposium on Discrete Algorithms (SODA)},
chapter = {},
pages = {712-739},
doi = {10.1137/1.9781611977912.28},
URL = {https://epubs.siam.org/doi/abs/10.1137/1.9781611977912.28},
eprint = {https://epubs.siam.org/doi/pdf/10.1137/1.9781611977912.28},
    abstract = { Abstract The streaming model was introduced to parameterized complexity independently by Fafianie and Kratsch [MFCS14] and by Chitnis, Cormode, Hajiaghayi and Monemizadeh [SODA15]. Subsequently, it was broadened by Chitnis, Cormode, Esfandiari, Hajiaghayi and Monemizadeh [SPAA15] and by Chitnis, Cormode, Esfandiari, Hajiaghayi, McGregor, Monemizadeh and Vorotnikova [SODA16]. Despite its strong motivation, the applicability of the streaming model to central problems in parameterized complexity has remained, for almost a decade, quite limited. Indeed, due to simple Ω(n)-space lower bounds for many of these problems, the kO(1) · polylog(n)-space requirement in the model is too strict. Thus, we explore semi-streaming algorithms for parameterized graph problems, and present the first systematic study of this topic. Crucially, we aim to construct succinct representations of the input on which optimal post-processing time complexity can be achieved. • We devise meta-theorems specifically designed for parameterized streaming and demonstrate their applicability by obtaining the first -space streaming algorithms for well-studied problems such as Feedback Vertex Set on Tournaments, Cluster Vertex Deletion, Proper Interval Vertex Deletion and Block Vertex Deletion. In the process, we demonstrate a fundamental connection between semi-streaming algorithms for recognizing graphs in a graph class H and semi-streaming algorithms for the problem of vertex deletion into H. • We present an algorithmic machinery for obtaining streaming algorithms for cut problems and exemplify this by giving the first -space streaming algorithms for Graph Bipartitization, Multiway Cut and Subset Feedback Vertex Set. * The full version of the paper can be accessed at https://arxiv.org/abs/2308.01598 Lokshtanov is supported by NSF award CCF2008838. Misra is supported by Google India Research Award 2022, and by Start-Up Grant 2022 (SRG/2022/001927) of Science and Engineering Research Board (SERB), India. Ramanujan is supported by Engineering and Physical Sciences Research Council Grants (EP/V007793/1, EP/V044621/1). Saurabh is supported by the European Research Council (ERC) under the European Union's Horizon 2020 research and innovation programme (grant agreement No. 819416); and he also acknowledges the support of Swarnajayanti Fellowship grant DST/SJF/MSA-01/2017-18. Zehavi is supported by European Research Council (ERC) grant titled PARAPATH. }
}

@InProceedings{bergougnoux_et_al:LIPIcs.ESA.2023.18,
  author =	{Bergougnoux, Benjamin and Chekan, Vera and Ganian, Robert and Kant\'{e}, Mamadou Moustapha and Mnich, Matthias and Oum, Sang-il and Pilipczuk, Micha{\l} and van Leeuwen, Erik Jan},
  title =	{{Space-Efficient Parameterized Algorithms on Graphs of Low Shrubdepth}},
  booktitle =	{31st Annual European Symposium on Algorithms (ESA 2023)},
  pages =	{18:1--18:18},
  series =	{Leibniz International Proceedings in Informatics (LIPIcs)},
  ISBN =	{978-3-95977-295-2},
  ISSN =	{1868-8969},
  year =	{2023},
  volume =	{274},
  editor =	{G{\o}rtz, Inge Li and Farach-Colton, Martin and Puglisi, Simon J. and Herman, Grzegorz},
  publisher =	{Schloss Dagstuhl -- Leibniz-Zentrum f{\"u}r Informatik},
  address =	{Dagstuhl, Germany},
  URL =		{https://drops.dagstuhl.de/entities/document/10.4230/LIPIcs.ESA.2023.18},
  URN =		{urn:nbn:de:0030-drops-186710},
  doi =		{10.4230/LIPIcs.ESA.2023.18},
  annote =	{Keywords: Parameterized complexity, shrubdepth, space complexity, algebraic methods}
}

@article{10.1145/2627692.2627694,
author = {McGregor, Andrew},
title = {Graph stream algorithms: a survey},
year = {2014},
issue_date = {March 2014},
publisher = {Association for Computing Machinery},
address = {New York, NY, USA},
volume = {43},
number = {1},
issn = {0163-5808},
url = {https://doi.org/10.1145/2627692.2627694},
doi = {10.1145/2627692.2627694},
abstract = {Over the last decade, there has been considerable interest in designing algorithms for processing massive graphs in the data stream model. The original motivation was two-fold: a) in many applications, the dynamic graphs that arise are too large to be stored in the main memory of a single machine and b) considering graph problems yields new insights into the complexity of stream computation. However, the techniques developed in this area are now finding applications in other areas including data structures for dynamic graphs, approximation algorithms, and distributed and parallel computation. We survey the state-of-the-art results; identify general techniques; and highlight some simple algorithms that illustrate basic ideas.},
journal = {SIGMOD Rec.},
month = may,
pages = {9–20},
numpages = {12}
}

@inproceedings{10.1145/195058.195179,
author = {Alon, Noga and Yuster, Raphy and Zwick, Uri},
title = {Color-coding: a new method for finding simple paths, cycles and other small subgraphs within large graphs},
year = {1994},
isbn = {0897916638},
publisher = {Association for Computing Machinery},
address = {New York, NY, USA},
url = {https://doi.org/10.1145/195058.195179},
doi = {10.1145/195058.195179},
booktitle = {Proceedings of the Twenty-Sixth Annual ACM Symposium on Theory of Computing},
pages = {326–335},
numpages = {10},
location = {Montreal, Quebec, Canada},
series = {STOC '94}
}

@article{10.1093/bioinformatics/btn163,
author = {Alon, Noga and Dao, Phuong and Hajirasouliha, Iman and Hormozdiari, Fereydoun and Sahinalp, S. Cenk},
title = {Biomolecular network motif counting and discovery by color coding},
year = {2008},
issue_date = {July 2008},
publisher = {Oxford University Press, Inc.},
address = {USA},
volume = {24},
number = {13},
issn = {1367-4803},
url = {https://doi.org/10.1093/bioinformatics/btn163},
doi = {10.1093/bioinformatics/btn163},
abstract = {Protein–protein interaction (PPI) networks of many organisms share global topological features such as degree distribution, k-hop reachability, betweenness and closeness. Yet, some of these networks can differ significantly from the others in terms of local structures: e.g. the number of specific network motifs can vary significantly among PPI networks.Counting the number of network motifs provides a major challenge to compare biomolecular networks. Recently developed algorithms have been able to count the number of induced occurrences of subgraphs with k≤ 7 vertices. Yet no practical algorithm exists for counting non-induced occurrences, or counting subgraphs with k≥ 8 vertices. Counting non-induced occurrences of network motifs is not only challenging but also quite desirable as available PPI networks include several false interactions and miss many others.In this article, we show how to apply the ‘color coding’ technique for counting non-induced occurrences of subgraph topologies in the form of trees and bounded treewidth subgraphs. Our algorithm can count all occurrences of motif G′ with k vertices in a network G with n vertices in time polynomial with n, provided k=O(log n). We use our algorithm to obtain ‘treelet’ distributions for k≤ 10 of available PPI networks of unicellular organisms (Saccharomyces cerevisiae Escherichia coli and Helicobacter Pyloris), which are all quite similar, and a multicellular organism (Caenorhabditis elegans) which is significantly different. Furthermore, the treelet distribution of the unicellular organisms are similar to that obtained by the ‘duplication model’ but are quite different from that of the ‘preferential attachment model’. The treelet distribution is robust w.r.t. sparsification with bait/edge coverage of 70\% but differences can be observed when bait/edge coverage drops to 50\%. Contact: cenk@cs.sfu.ca},
journal = {Bioinformatics},
month = jul,
pages = {i241–i249}
}

@article{10.1109/TCBB.2020.3040910,
author = {Davidov, Nathan and Hernandez, Amanda and Jian, Justin and McKenna, Patrick and Medlin, K.A. and Mojumder, Roadra and Owen, Megan and Quijano, Andrew and Rodriguez, Amanda and St. John, Katherine and Thai, Katherine and Uraga, Meliza},
title = {Maximum Covering Subtrees for Phylogenetic Networks},
year = {2020},
issue_date = {Nov.-Dec. 2021},
publisher = {IEEE Computer Society Press},
address = {Washington, DC, USA},
volume = {18},
number = {6},
issn = {1545-5963},
url = {https://doi.org/10.1109/TCBB.2020.3040910},
doi = {10.1109/TCBB.2020.3040910},
abstract = {Tree-based phylogenetic networks, which may be roughly defined as leaf-labeled networks built by adding arcs only between the original tree edges, have elegant properties for modeling evolutionary histories. We answer an open question of Francis, Semple, and Steel about the complexity of determining how far a phylogenetic network is from being tree-based, including non-binary phylogenetic networks. We show that finding a phylogenetic tree covering the maximum number of nodes in a phylogenetic network can be computed in polynomial time via an encoding into a minimum-cost flow problem.},
journal = {IEEE/ACM Trans. Comput. Biol. Bioinformatics},
month = nov,
pages = {2823–2827},
numpages = {5}
}

@inproceedings{10.5555/1873601.1873635,
author = {Barman, Siddharth and Chawla, Shuchi},
title = {Region growing for multi-route cuts},
year = {2010},
isbn = {9780898716986},
publisher = {Society for Industrial and Applied Mathematics},
address = {USA},
abstract = {We study a number of multi-route cut problems: given a graph G = (V, E) and connectivity thresholds k(u, v) on pairs of nodes, the goal is to find a minimum cost set of edges or vertices the removal of which reduces the connectivity between every pair (u, v) to strictly below its given threshold. These problems arise in the context of reliability in communication networks; They are natural generalizations of traditional minimum cut problems where the thresholds are either 1 (we want to completely separate the pair) or ∞ (we don't care about the connectivity for the pair). We provide the first non-trivial approximations to a number of variants of the problem including for both node-disjoint and edge-disjoint connectivity thresholds. A main contribution of our work is an extension of the region growing technique for approximating minimum multicuts to the multi-route setting. When the connectivity thresholds are either 2 or ∞ (the "2-route cut" case), we obtain polylogarithmic approximations while satisfying the thresholds exactly. For arbitrary connectivity thresholds this approach leads to bicriteria approximations where we approximately satisfy the thresholds and approximately minimize the cost. We present a number of different algorithms achieving different cost-connectivity tradeoffs.},
booktitle = {Proceedings of the Twenty-First Annual ACM-SIAM Symposium on Discrete Algorithms},
pages = {404–418},
numpages = {15},
location = {Austin, Texas},
series = {SODA '10}
}

%\newpage
%\appendix
%\section*{APPENDIX}
%
%\input{appendix}
%
%
%\input{maxleaf-l}
%\input{multicut-l}

%\input{long-path}
%\input{steiner-tree}
%\input{maxleaf}
%\input{multicut}

\end{document}